\newcommand{\diag}{\mathop{\mathrm{diag}}}
\DeclareMathOperator{\blkdiag}{blkdiag}
\newcommand\blfootnote[1]{%
\begingroup
\renewcommand\thefootnote{}\footnote{#1}%
\addtocounter{footnote}{-1}%
\endgroup
}
\newtheorem{proposition}{Proposition}
\newtheorem{corollary}{Corollary}
\begin{document}

\title{Modern Base Station Architecture: Enabling Passive Beamforming with Beyond Diagonal RISs}

\author{Mahmoud~Raeisi,~\IEEEmembership{Member,~IEEE,}
        Hui~Chen,~\IEEEmembership{Member,~IEEE,}
        Henk~Wymeersch,~\IEEEmembership{Fellow,~IEEE,}
        ~and~Ertugrul~Basar,~\IEEEmembership{Fellow,~IEEE}
\thanks{Mahmoud Raeisi is with the Department of Electrical and Electronics Engineering, Istanbul Medipol University, 34810 Istanbul, Turkey. (e-mail: mahmoud.raeisi@medipol.edu.tr)}
\thanks{Hui Chen and Henk Wymeersch are with the Department of Electrical Engineering, Chalmers University of Technology, 41296 Gothenburg, Sweden (e-mail: hui.chen@chalmers.se; henkw@chalmers.se).}
\thanks{E. Basar is with the Department of Electrical Engineering, Tampere University, 33720 Tampere, Finland, on leave from the Department of Electrical and Electronics Engineering, Koc University, 34450 Sariyer, Istanbul, Turkey (email: ertugrul.basar@tuni.fi).}

\thanks{The work of Hui Chen and Henk Wymeersch is supported by the Swedish Research Council under VR grant 2022-03007 and the SNS JU project 6G-DISAC under the EU’s Horizon Europe Research and Innovation Program under Grant Agreement No 101139130.}}


\maketitle

\begin{abstract}
Beamforming plays a crucial role in millimeter wave (mmWave) communication systems to mitigate the severe attenuation inherent to this spectrum. However, the use of large active antenna arrays in conventional architectures often results in high implementation costs and excessive power consumption, limiting their practicality. As an alternative, deploying large arrays at transceivers using passive devices, such as reconfigurable intelligent surfaces (RISs), offers a more cost-effective and energy-efficient solution.
In this paper, we investigate a promising base station (BS) architecture that integrates a beyond diagonal RIS (BD-RIS) within the BS to enable passive beamforming. 
By utilizing Takagi's decomposition and leveraging the effective beamforming vector, the RIS profile can be designed to enable passive beamforming directed toward the target.
Through the beamforming analysis, we reveal that BD-RIS provides robust beamforming performance across various system configurations, whereas the traditional diagonal RIS (D-RIS) exhibits instability with increasing RIS size and decreasing BS-RIS separation—two critical factors in optimizing RIS-assisted systems. Comprehensive computer simulation results across various aspects validate the superiority of the proposed BS-integrated BD-RIS over conventional D-RIS architectures, showcasing performance comparable to active analog beamforming antenna arrays.
\end{abstract}

\begin{IEEEkeywords}
Beyond diagonal RIS, millimeter wave, bit error rate, achievable rate, passive beamforming.
\end{IEEEkeywords}
\IEEEpeerreviewmaketitle

\section{Introduction}\label{Sec:Introduction}

\IEEEPARstart{T}{he} \textcolor{black}{growing demand for high-capacity and low-latency communication has accelerated the development of advanced wireless systems. To support emerging applications, higher frequency bands such as the millimeter-wave (mmWave) spectrum have gained traction for sixth-generation (6G) networks~\cite{10176315, wang2020joint, raeisi2022cluster}. The wide bandwidth available at mmWave frequencies enables high data rates, while the short wavelength supports large antenna arrays capable of forming highly directional beams, which are essential for mitigating severe path loss and facilitating spatial multiplexing~\cite{mahmood20223, 8891298}.}

Despite these advantages, mmWave systems often require costly and power-hungry hardware, especially when using fully digital beamforming. 
\textcolor{black}{Hybrid beamforming has been widely recognized as a cost-effective alternative to fully digital beamforming in mmWave systems. Numerous studies have sought to reduce hardware costs and system complexity by exploring scenarios such as angular-based precoding~\cite{koc2020hybrid}, three-dimensional (3D) antenna array designs~\cite{mahmood20223}, and user grouping strategies~\cite{8891298}. To further minimize hardware requirements, fixed phase shifter architectures have been introduced~\cite{yu2018hardware}. However, massive MIMO deployments still demand a large number of phase shifters, resulting in increased cost and power consumption.}

On the other hand, reconfigurable intelligent surfaces (RISs) have emerged as a low-cost alternative that can dynamically shape the wireless environment~\cite{basar2019wireless, raeisi2024comprehensive}. While RISs are traditionally viewed as auxiliary elements, recent studies have explored their potential as primary beamformers at the base station (BS) or user equipment (UE) \cite{10515204}. This approach opens the door to RIS-integrated transceivers that reduce circuit complexity and power consumption, enabling fully passive analog beamforming.



\subsection{Related Works}


Programmable metasurfaces have gained attention for their passive nature and ability to enable low-cost arrays that direct incident electromagnetic (EM) waves. Explored types include passive RIS \cite{basar2019wireless}, active RIS \cite{9998527}, simultaneously transmitting and receiving (STAR) RIS \cite{9570143}, stacked intelligent metasurface (SIM), and beyond diagonal (BD) RIS \cite{li2023reconfigurable}. 
Some studies propose RISs as primary beamformers at the BS or UE to simplify massive MIMO deployment \cite{10515204, raeisi2024comprehensive}. Furthermore, integrating RIS directly within transceivers mitigates the multiplicative path loss challenges typically associated with conventional RIS-assisted systems \cite{raeisi2024comprehensive}.

\textcolor{black}{User-side RIS integration has been explored as a means to realize energy- and cost-efficient large-scale arrays~\cite{9685418, 9598898}. In~\cite{10144102}, the authors propose embedding an RIS within the BS radome as an auxiliary passive array, enabling real-time reconfiguration to enhance performance cost-effectively. In~\cite{9991837}, a hybrid beamforming approach is developed to suppress sidelobes and optimize beam patterns using a least-squares-based method, jointly adjusting BS beamforming and transmissive RIS phases. A key advancement in this domain is SIM, which seamlessly integrates RISs within BS or UE hardware to passively support advanced MIMO functionalities. By leveraging multilayer meta-atom structures, SIM enables scalable passive arrays at transceivers using low-cost components~\cite{10515204, 10535263, 10279173, 10679332, 10445164}.}


\textcolor{black}{BD-RIS represents an advanced evolution of RIS technology, enabling control over both the phase and amplitude of impinging signals without requiring active amplification or attenuation~\cite{li2023reconfigurable, 9913356, 10158988}. Unlike diagonal RIS (D-RIS), BD-RIS incorporates inter-element connectivity, allowing amplitude adjustments across elements and enhancing beamforming capabilities. This makes BD-RIS a compelling candidate for transceiver integration, where its ability to perform passive analog beamforming can replace conventional phase shifters, thereby simplifying transceiver design while maintaining high performance. In addition, other non-diagonal RIS architectures have been investigated, including non-diagonal scattering-based RIS designs \cite{9737373}, group-connected designs \cite{10472097}, and joint-sector-based configurations \cite{10643263}, aiming to enhance performance while reducing hardware complexity. 
As shown in~\cite{9514409}, BD-RIS can outperform D-RIS under rich-scattering conditions where the BS-RIS and RIS-UE elements exhibit linear independence. However, this performance advantage diminishes in sparsely scattered environments.}


\textcolor{black}{Initial studies have explored the potential of BD-RIS integration at the BS. In~\cite{10308579}, a reflective BD-RIS in the sub-6 GHz band is shown to effectively mitigate inter-user interference in multi-user systems. The work in~\cite{10693959} examines BD-RIS for integrated sensing and communication (ISAC), but models the mmWave channel using rich scattering, which does not fully capture the sparse nature of mmWave propagation. Nevertheless, the adoption of a multi-user setup enriches the channel and enhances BD-RIS performance over D-RIS. In~\cite{10530995}, a single-layer SIM with BD-RIS outperforms all SIM designs based on D-RIS, highlighting BD-RIS’s potential in RIS-integrated systems. However, the RIS-UE channel is again modeled using Rayleigh fading. Finally, \cite{raeisi2024efficient} demonstrates that integrating a linear BD-RIS at the BS enhances beamforming gain compared to D-RIS, though the study is limited to linear arrays and localization estimation.}


Integrating BD-RIS within the BS seamlessly incorporates it as part of the BS, rendering it indistinguishable as a separate unit from the perspective of the UE. Consequently, a fundamental step in designing such a structure involves evaluating system performance under varying BS design parameters, such as array size and BS-RIS separation. In this context, the BS-RIS channel assumes a pivotal role in the design process, as it is directly influenced by these parameters and remains entirely under the service provider’s control. While initial studies \cite{10308579}, \cite{10693959}, \cite{10530995} have explored the concept of BS-integrated BD-RIS, they fall short of analyzing the impact of these design parameters on the proposed architecture. Notably, the primary advantage of BD-RIS over D-RIS lies in its ability to adjust the amplitude of impinging signals \cite{li2023reconfigurable}. Given the close proximity of RIS to the active antenna at the BS, the amplitude variations of signals emitted by the active antenna across different RIS elements become significant. Thus, it is reasonable to anticipate that BD-RIS would outperform D-RIS without factoring \textcolor{black}{environmental scattering conditions in RIS-UE channel.} Furthermore, employing a Rayleigh channel model for the RIS-UE link or extending the analysis to multi-user scenarios, as partially considered in \cite{10308579}, \cite{10693959}, \cite{10530995}, introduces additional amplitude variations across the RIS elements, further amplifying the performance gains of BD-RIS over D-RIS.

\subsection{Motivations and Contributions}\label{Sec:GapsMotivations}


The primary motivation for deploying large arrays and massive MIMO in high-frequency communication systems is forming narrow beams toward intended targets, effectively mitigating severe path loss and high attenuation. 
The influence of BS-integrated BD-RIS on beamforming gain, however, remains unexplored in the aforementioned studies. Moreover, the mmWave communication environment is characterized by spatially sparse scattering, making the assumption of a rich-scattering Rayleigh channel model unrealistic and disconnected from practical scenarios.
With this in mind, the main contributions of this paper are summarized as follows:

\begin{table*}[h]
\color{black}
\scriptsize
\centering
\caption{Comparative summary of existing BS-integrated BD-RIS works and the proposed system, highlighting key features in terms of architecture, scattering conditions, performance metrics, and system-level insights.}
\begin{tabularx}{\textwidth}{|c|c|c|c|c|c|X|} 
    \hline
    \textbf{Ref.} & \textbf{Architecture} & \textbf{System} & \textbf{Channel}  & \textbf{Scattering} & \textbf{Metric} & \textbf{Highlights} \\
    \hline \hline
    \cite{10308579} & Reflective & Multi User & Pure LOS  & Rich & Spectral Efficiency & Inter-user interference suppression \\
    \hline
    \cite{10693959} & Transmissive & Multi User & Rician  & Rich & Sum Rate & Joint sensing and communication gain  \\
    \hline
    \cite{10530995} & Transmissive & Single User & Rayleigh  & Rich & Normalized Channel Gain & SIM gain validation for BD-RIS \\
    \hline
    \cite{raeisi2024efficient} & Transmissive & Single User & Pure LOS & Sparse & CRB  & Localization performance  \\
    \hline 
    \textit{This paper} & \textit{Transmissive} & \textit{Single User} & \textit{Geometric} & \textit{Sparse} & \textit{BER, Achievable Rate} & \textit{Comprehensive evaluation, robust beamforming, and geometry-guided grouping strategy} \\
    \hline
\end{tabularx}
\label{tab:Contributions}
\end{table*}

\begin{itemize}
    \item \textit{BS-Enabled Passive Beamforming}. We investigate a BS-integrated BD-RIS system model that facilitates passive beamforming at the BS by emulating a multiple-input single-output (MISO) configuration through a single-input single-output (SISO)-assisted BD-RIS. 
    \textcolor{black}{Utilizing the widely adopted geometric cluster-based channel model, which effectively captures the spatially sparse scattering typical of mmWave environments, we demonstrate that BD-RIS can form narrow beams with beamforming gains comparable to those of traditional active antenna arrays. In contrast, D-RIS performance is limited, particularly when a large number of elements are deployed in shorter distances to the active BS antennas. This critical aspect has not been explored in \cite{10308579, 10693959, 10530995}.}  

    \item \textit{Geometric Grouping Strategy}. \textcolor{black}{We demonstrate that the internal geometry of the BS-integrated BD-RIS plays a crucial role in reducing the circuit complexity of the proposed system. Importantly, achieving beamforming gains close to those of active antenna arrays does not require a fully-connected BD-RIS architecture. Instead, by organizing BD-RIS elements into groups that are symmetric with respect to the active antenna, high beamforming performance can be maintained. This is because the channel amplitude variations of the BS-RIS link within each group remain consistent with those observed across the entire BD-RIS. 
    To the best of our knowledge, such a geometry-aware, group-connected BD-RIS design in a BS-integrated configuration has not been previously investigated.}

    \item \textit{\textcolor{black}{Mathematical Analysis and} Simulation Insights}. We evaluate the beamforming gain of the proposed BS-integrated BD-RIS configured as a uniform planar array, benchmarking its performance against traditional D-RIS and active analog beamforming antenna arrays. Through the introduction of a novel metric, \textit{channel amplitude variations}, we highlight and address the fundamental limitation of BS-integrated D-RIS systems. 
    \textcolor{black}{We provide a formal mathematical proof indicating that integrating BD-RIS instead of D-RIS inside the BS guarantees a minimum performance gain, which is independent of the scattering conditions and is fully determined by the BS design parameters, i.e., array size and BS-RIS separation. We further demonstrate that a substantial part of the gain originates from the BS design parameters rather than the environmental scattering conditions.} To the best of our knowledge, this aspect was unknown in the previous works.
    Our findings also reveal that the proposed BS-integrated BD-RIS achieves a beamforming gain and beam pattern comparable to the performance of active analog beamforming antenna arrays. In contrast, D-RIS exhibits limited beamforming capability and wider beam patterns. Besides, we conducted various simulations to evaluate the average bit error rate (ABER) and validate the results with the theoretical upper bound, along with a comprehensive achievable rate analysis across various parameters under various conditions.
\end{itemize}
\textcolor{black}{To highlight our contributions in comparison to related works in the literature, Table \ref{tab:Contributions} provides a detailed summary of the considered BS-integrated BD-RIS-based architectures, key system models, channel characteristics, performance metrics, and unique insights offered by each study.}

The paper is organized as follows: Section \ref{sec:SystemChannelSignalModel} introduces the system, channel, and signal models. Section \ref{sec:BDRIS_Circuit_Configuration} outlines the BD-RIS architecture and its configuration algorithm for fully-connected and group-connected modes. Section \ref{sec:CAV} highlights key differences between BD-RIS and conventional D-RIS, discussing the impact of design parameters on performance. Section \ref{Sec:SimulationResutls} presents simulation results, comparing the proposed BS-integrated BD-RIS with benchmarks. Finally, Section \ref{Sec:Conclusion} concludes the paper.


\blfootnote{
\textit{Notation}: Bold uppercase and lowercase letters are used to denote matrices and vectors, respectively. The operators $(\cdot)^\mathsf{T}$, $(\cdot)^{*}$, $(\cdot)^\mathsf{H}$, $\|\cdot\|$, $\vert\cdot\vert$, $\angle$, $\diag(\cdot)$, and $\blkdiag(\cdot)$ denote transpose, conjugate, Hermitian, norm, absolute value, phase of a complex number, diagonalization, and block-diagonalization, respectively. $\mathbb{C}$ represents the set of complex numbers while $\odot$ denotes the Hadamard (element-wise) product. The operators $\mathbb{E}_X[\cdot]$ and $\mathbb{V}_X[\cdot]$ indicate the expectation and variance taken over the random variable $X$, while $\mathbb{P}(\cdot)$ represents the probability of an event. The operators $\cup$ and $\cap$ denote set union and intersection, respectively, while $\emptyset$ denotes the empty set. The notation $[\cdot]_i$ refers to the $i$-th element of a vector, while $\boldsymbol{v}_{[i:j]}$ defines a sub-vector containing elements from position $i$ to $j$. Furthermore, $[a:\Delta:b]$ denotes a discrete sequence ranging from $a$ to $b$ with increments of $\Delta$. $\mathcal{CN}(\mu,\sigma^2)$ denotes a complex Gaussian distribution with mean $\mu$ and variance $\sigma^2$, $\mathcal{L}(\varpi, \varsigma)$ refers to a Laplace distribution with location parameter $\varpi$ and scale parameter $\varsigma$, and $\mathcal{U}(a,b)$ denotes a uniform distribution within the interval $[a, b]$. The $Q$-function is defined as $Q(x) = \frac{1}{\sqrt{2 \pi}}\int_x^{\infty} \exp{(-\frac{u^2}{2})} du$. $\boldsymbol{I}_n$ denotes the $n \times n$ identity matrix, and $\jmath = \sqrt{-1}$ represents the imaginary unit.
}

\section{System, Channel, and Signal Model}\label{sec:SystemChannelSignalModel}
In this section, we begin by presenting the system model, outlining the architecture of the proposed BS-integrated BD-RIS, the arrangement of its passive elements, and its role in enabling passive beamforming for efficient directional communication in the mmWave environment. Next, we detail the channel model, capturing the interaction between the active antenna and the BD-RIS, as well as the propagation characteristics of the wireless channel between the BD-RIS and the UE. We then describe the signal model, explaining the transmission of symbols, passive beamforming by the BD-RIS, and the formulation and detection of the received signal at the UE.

\subsection{System Model}\label{sec:system_model}

Fig. \ref{fig:SystemModel} illustrates the proposed system model and the BS structure designed for deployment in a mmWave environment. 
Building upon the motivations discussed in Section \ref{Sec:Introduction}, we consider a point-to-point (P2P) SISO system where the BS is equipped with a single active antenna and a passive BD-RIS integrated into the BS as depicted in Fig. \ref{fig:SystemModel}(a). This configuration enables passive beamforming, effectively forming the beams to combat the severe path loss and attenuation characteristic of the mmWave spectrum \cite{10176315, 10188340, raeisi2024comprehensive}. Fig. \ref{fig:SystemModel}(b) provides a 3D view of the BD-RIS side facing the active antenna, with the opposite side, which faces outward, having an identical structure. Each side of the BD-RIS is referred to as a sector, as illustrated in Fig. \ref{fig:SystemModel}(c).
For the proposed application, the BD-RIS operates exclusively in the transmissive mode (signals are sending to sector 2), with its reflective mode deactivated \cite{9913356}.{\color{black}\footnote{\color{black}The feasibility of implementing a transmission-mode BD-RIS without a dedicated mode-switching bias line has been demonstrated in \cite{ming2025hybrid}, using a design based on 2-bit phase-reconfigurable RF antennas realized through RF switches. While these switches still require a small DC bias, the overall design supports a hardware-efficient implementation consistent with the system assumed in this work.}}

\begin{figure}
    \centering
    \includegraphics[width=\columnwidth]{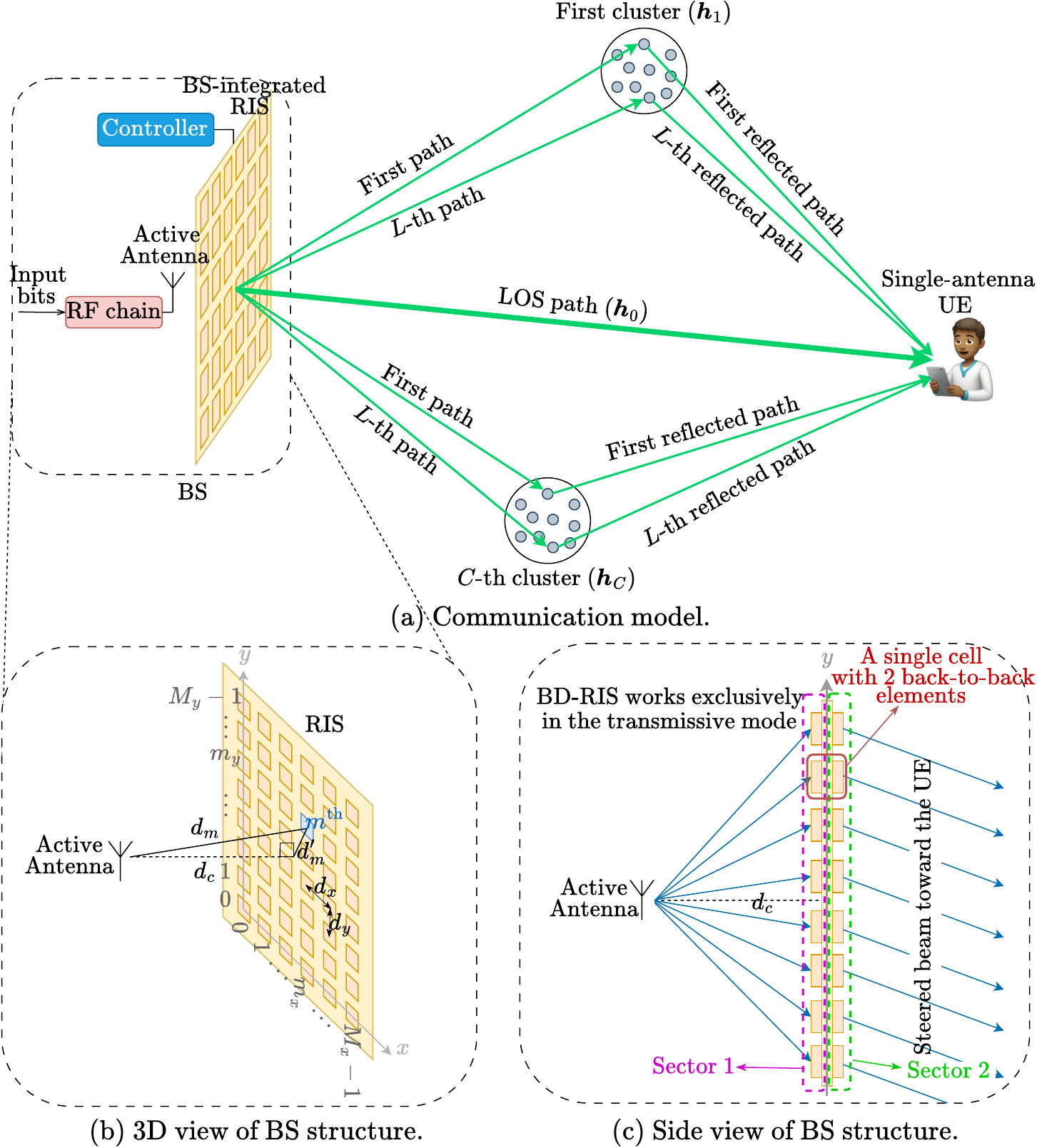}
    \caption{System model: (a) Communication model showcasing the BS-integrated BD-RIS and single-antenna UE, illustrating the LOS path and NLOS clusters with multipath components; (b) 3D view of the BS structure, highlighting the BD-RIS geometry, BS-RIS separation ($d_c$), and the active antenna-to-RIS element distances ($d_m$); (c) Side view of the BS structure, demonstrating the BD-RIS operation in transmissive mode, where Sector 1 receives the incident wave, and Sector 2 transmits the steered beam toward the UE.}
    \label{fig:SystemModel}
\end{figure}

To facilitate exclusive transmissive operation, the passive elements on each side are organized into a uniform planar array with  $M_x$  columns and  $M_y$  rows. Each back-to-back pair of passive elements from the two sectors of the BD-RIS constitutes a single cell, enabling seamless coordination between the two sectors.
This arrangement comprises a total of  $M = M_x \times M_y$  cells, allowing the integrated BD-RIS to effectively steer the EM wave radiated by the active antenna toward the UE.
To maintain focus on the proposed system model, the details of the BD-RIS circuit are provided separately in Section \ref{sec:GroupConnected}.

The spacing between adjacent columns along the $x$-axis is $d_x$, while the spacing between adjacent rows along the $y$-axis is $d_y$. The $m$-th cell ($0 \leq m \leq M-1$) is located at the intersection of the $m_y$-th row ($0 \leq m_y \leq M_y - 1$) and $m_x$-th column ($0 \leq m_x \leq M_x - 1$), where the index $m$ is given by $m = m_y M_x + m_x$. 
The active antenna is positioned along an axis perpendicular to the RIS plane, passing through the geometric center of the RIS surface at a distance of $d_c$, as shown in Fig. \ref{fig:SystemModel}(b). Additionally, the distance between the active antenna and the $m$-th cell is denoted as $d_m$ and defined as follows:
\begin{equation}
    d_m = \sqrt{d_c^2 + d_m'^2},
\end{equation}
where $d_m'$ represents the distance between the  $m$-th element and the center of the BD-RIS, defined as:
\begin{equation}
    d_m' = \frac{1}{2}\sqrt{(d_x|2m_x - M_x + 1|)^2 + (d_y|2m_y - M_y + 1|)^2}.
\end{equation}

\subsection{Channel Model} 
In the proposed system, the channel between the active antenna and the BD-RIS plays a critical role in shaping the performance of passive beamforming. The integration of the BD-RIS within the BS introduces unique propagation characteristics, as the EM wave emitted by the active antenna interacts with each BD-RIS cell, arriving with distinct amplitudes and phases. To capture this interaction, the channel coefficient vector $\boldsymbol{g} \in \mathbb{C}^{M \times 1}$ from the active antenna to the BD-RIS is derived using the Rayleigh-Sommerfeld diffraction theory for near-field propagation \cite{10158690, 10643881, 10557708, 10622385, 10445164} as follows:
\begin{equation}\label{eq:Rayleigh-Sommerfeld}
    \left[\boldsymbol{g}\right]_m = \frac{A d_c}{d_m^2} \left( \frac{1}{2 \pi d_m} -  \frac{\jmath}{\lambda} \right) e^{  \jmath 2 \pi d_m / \lambda},
\end{equation}
where $A$ represents the area of each passive element, and  $\lambda$ represents the wavelength of the carrier signal.{\color{black}\footnote{\color{black}In this work, we adopt the unilateral approximation, assuming negligible electrical interaction between the BD-RIS and the active antenna, following the modeling framework in \cite{10530995}. This choice is further supported by the typical antenna–RIS spacing ($\geq \lambda/2$), which, as in conventional array architectures, ensures weak mutual coupling.}}

After traversing the BD-RIS, the signal propagates through the wireless channel between the BD-RIS and the UE. In the mmWave environment, such channels are characterized by significant propagation challenges, including penetration loss, severe path loss, and attenuation, which create a sparsely scattered propagation environment. To capture these conditions accurately, the widely recognized clustered geometric channel model is employed. This model accounts for the combined effect of a  line-of-sight (LOS) path and multiple non-LOS (NLOS) clusters, as illustrated in Fig. \ref{fig:SystemModel}(a). Specifically, the channel between the BD-RIS and UE, denoted as $\boldsymbol{h} \in \mathbb{C}^{M \times 1}$, is modeled as a superposition of these paths and is expressed as follows \cite{10176315, 10188340, akdeniz2014millimeter, ying2020gmd}:
\begin{equation}\label{eq:ChannelModel}
    \begin{aligned}
        \boldsymbol{h} & = \sqrt{\frac{M}{C L + 1}} \Bigg( 
            \underbrace{\alpha \boldsymbol{a}^\mathsf{H} (\varphi_0,\vartheta_0)}_{\boldsymbol{h}_0^{\mathsf{T}}} 
            + \sum_{c=1}^{C} 
            \underbrace{\sum_{\ell = 1}^{L} \beta_{c,\ell} \boldsymbol{a}\mathsf{^H}(\varphi_{c,\ell},\vartheta_{c,\ell})}_{\boldsymbol{h}_c^{\mathsf{T}}} 
        \Bigg)^{\mathsf{T}} \\
        &= \sqrt{\frac{M}{C L + 1}} \Bigg( \underbrace{\boldsymbol{h}_0}_{\textrm{LOS component}} + \underbrace{\sum_{c = 1}^C \boldsymbol{h}_c}_{\textrm{NLOS component}} \Bigg),
    \end{aligned}
\end{equation}
where $C$ represents the number of NLOS clusters and $L$ denotes the number of NLOS paths in each cluster. 
The parameters $\alpha \sim \mathcal{CN}(0, \sigma_{\alpha}^2)$, $\varphi_0 \sim \mathcal{U}[-\pi, \pi]$, and $\vartheta_0 \sim \mathcal{U}[0, \frac{\pi}{2}]$ denote the complex path gain, azimuth angle of departure (AOD), and elevation AOD for the LOS path, respectively  \cite{10176315, 10188340}. For the $\ell$-th path in the $c$-th cluster, the corresponding parameters are $\beta_{c,\ell} \sim \mathcal{CN}(0, \sigma_{\beta}^2)$, $\varphi_{c,\ell} \sim \mathcal{L}(\varphi_c, \varsigma_c)$, and $\vartheta_{c,\ell} \sim \mathcal{L}(\vartheta_c, \varsigma_c)$, representing the complex path gain, azimuth AOD, and elevation AOD, respectively \cite{10176315, 10188340, el2014spatially}.
Here, $\varphi_c \sim \mathcal{U}[-\pi, \pi]$ and $\vartheta_c \sim \mathcal{U}[0, \frac{\pi}{2}]$ are the mean azimuth and elevation AODs while $\varsigma_c$ denotes the angular spread in both the azimuth and elevation dimensions for the $c$-th cluster \cite{10176315, 10188340, el2014spatially}. The variances of the LOS and NLOS path gains are defined by $\sigma_i^2 = 10^{-0.1 \textrm{PL}_i(d)}$, $i \in \{ \alpha, \beta \}$, where $\textrm{PL}_i(d)$ represents the path loss between the BD-RIS and UE, and $d$ denotes the distance between them.
In addition, $\boldsymbol{a}(\phi, \theta) \in \mathbb{C}^{M \times 1}$ represents the array response (steering) vector at the BD-RIS transmit terminal, indicating that the corresponding path is directed toward the azimuth angle $\phi$ and elevation angle $\theta$. The mathematical expression for $\boldsymbol{a}(\phi, \theta)$ is given as follows:
\begin{equation}\label{eq:steering_vector}
    \boldsymbol{a}(\phi,\theta) = \frac{1}{\sqrt{M}} [e^{\jmath \boldsymbol{k}\mathsf{^T} \boldsymbol{p}_0}, e^{\jmath \boldsymbol{k}\mathsf{^T} \boldsymbol{p}_1}, \dots, e^{\jmath \boldsymbol{k}\mathsf{^T} \boldsymbol{p}_{M-1}}]\mathsf{^T},
\end{equation}
where $\boldsymbol{k}$ represents the wave-number vector, which characterizes the direction of the corresponding path, and $\boldsymbol{p}_m$, $0 < m < M-1$, denotes the position vector of the $m$-th cell, specifying its location on the BD-RIS. These vectors are mathematically expressed as:
\begin{equation}
    \boldsymbol{k} = \frac{2\pi}{\lambda}[\sin{\theta} \cos{\phi}, \sin{\theta} \sin{\phi}]\mathsf{^T},
\end{equation}
\begin{equation}
    \boldsymbol{p}_m = [m_x d_x, m_y d_y]\mathsf{^T}.
\end{equation}
Notably, in (\ref{eq:ChannelModel}), the channel model is decomposed into the LOS sub-channel, represented by $\boldsymbol{h_0}$, and the summation of NLOS sub-channels, $\sum_{c = 1}^C \boldsymbol{h}_c$.

\subsection{Signal Model}
This paper focuses on P2P communication, where a single stream of information bits is transmitted to a single-antenna UE. The transmitted symbol $s$ is chosen from an $\mathcal{M}$-ary constellation set of order $\mathcal{M}$ and transmits to the BD-RIS through the transmission channel $\boldsymbol{g}$. As a result, the system’s spectral efficiency is defined as $\eta = \log_2 \mathcal{M}$ bits per channel use (bpcu). The BD-RIS applies a non-diagonal configuration scattering matrix $\boldsymbol{\Omega} \in \mathbb{C}^{M \times M}$ to the incoming EM wave from the active antenna, effectively steering it toward the UE through the mmWave channel $\boldsymbol{h}$. The BD-RIS configuration strategy is detailed in Section \ref{Sec: BD-RIS Config}.
The received signal at the UE can be expressed as:
\begin{equation}\label{eq:signal model}
    y = \sqrt{P} \boldsymbol{h}^{\mathsf{T}} \boldsymbol{\Omega} \boldsymbol{g} s + n = \sqrt{P} \boldsymbol{h}^{\mathsf{T}} \boldsymbol{\zeta} s + n,
\end{equation}
where $P$ is the BS transmitted power, and $n \in \mathcal{CN}(0,\sigma_n^2)$ is the additive noise component at the UE. 
Here, we define $\boldsymbol{\zeta} \in \mathbb{C}^{M \times 1}$ as $\boldsymbol{\zeta} = \boldsymbol{\Omega} \boldsymbol{g}$, representing the effective passive beamforming at the transmit terminal of BD-RIS.

\begin{figure}
    \centering
    \includegraphics[width=\columnwidth]{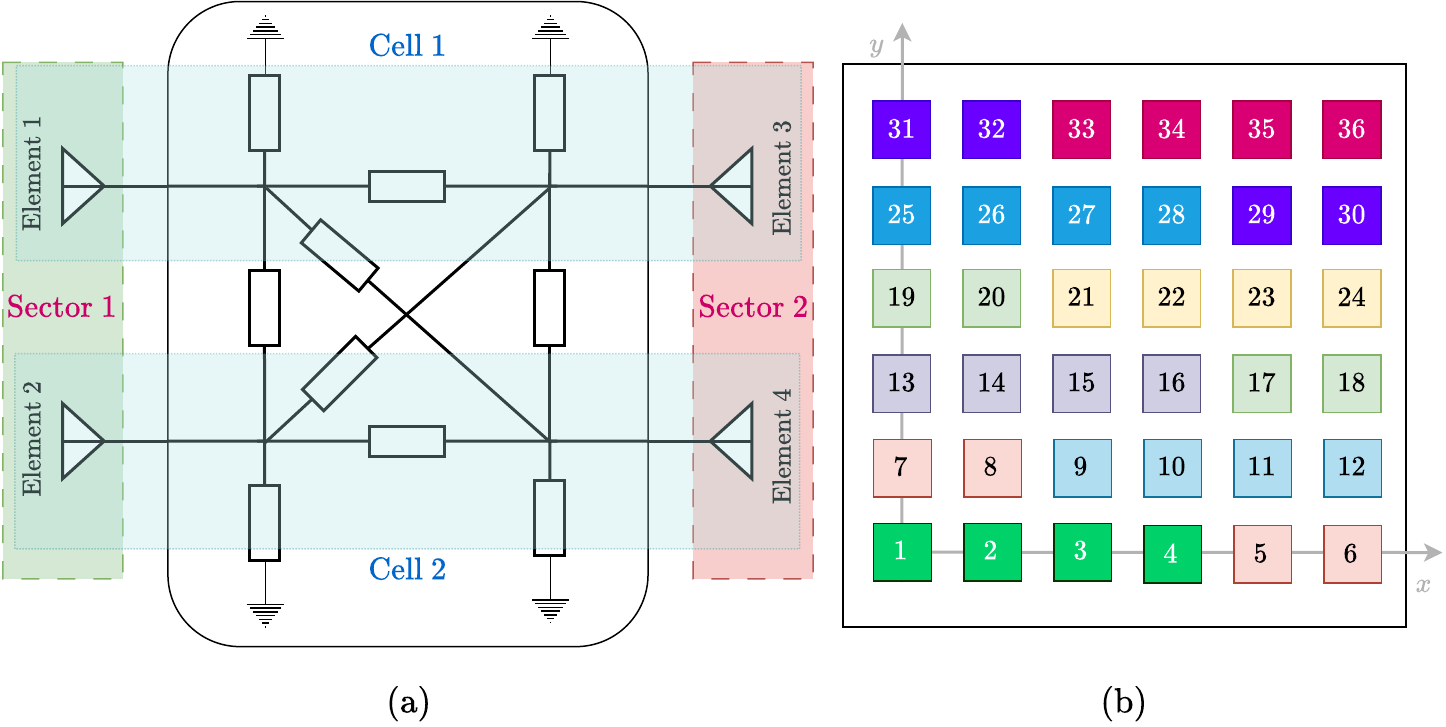}
    \caption{(a) A simplified example of the circuit structure for a fully-connected BD-RIS with two cells, illustrating inter-element connections for amplitude and phase manipulation. (b) A visualization of the group-connected structure with $M = 36$, $G = 9$, and $\bar{M} = 4$, showing the division of cells into groups for balancing performance and circuit complexity.}
    \label{fig:BDRIS_Structure_and_Grouping}
\end{figure}

A maximum-likelihood (ML) detector is employed in the UE to extract the transmitted symbols from the received signal $y$. The detection process is expressed as:
\begin{equation}
    \hat{s} = \arg \min_s | y - \sqrt{P} h_{\textrm{eff}} s |^2,
\end{equation}
where $\hat{s}$ denotes the estimated $\mathcal{M}$-ary symbol and $h_{\textrm{eff}} = \boldsymbol{h}^{\mathsf{T}} \boldsymbol{\Omega} \boldsymbol{g}$ represents the effective SISO channel gain.
It is worth mentioning that channel state information (CSI) can be acquired using existing channel estimation methods developed for SIM and holographic MIMO systems \cite{10445164, 10515204, 10535263, 9716880, 9693928}. \textcolor{black}{Moreover, recent works \cite{10446555, peng2022two} propose lightweight and learning-aided strategies for semi-passive RIS systems, which can be adapted to support efficient CSI acquisition in dynamic environments.}

By following the straightforward steps outlined in \cite{raeisi2023plug,10176315}, the unconditional pairwise error probability (UPEP) can be calculated as follows:
\begin{equation}
    \mathbb{P}(s^{\star} \rightarrow \hat{s}) = \mathbb{E}_{\mathcal{C}} \left [ Q (\frac{|\sqrt{P} \boldsymbol{h}^{\mathsf{T}} \boldsymbol{\Omega} \boldsymbol{g} (s^{\star} - \hat{s})|}{\sqrt{2} \sigma_n}) \right ],
\end{equation}
where $s^{\star}$ denotes the correct symbol considered for transmission, and $\mathcal{C} = \{ \alpha, \varphi_0, \vartheta_0, \beta_{c,\ell}, \varphi_{c,\ell}, \vartheta_{c,\ell} \}, \forall c, \ell$, is the set of channel parameters. Ultimately, by utilizing union-bound approach, we can set a theoretical upper bound on ABER as follows:
\begin{equation}\label{eq:Theoretical Bound}
    \textrm{ABER} \leq \frac{1}{\eta \mathcal{M}} \sum_{s^{\star}} \sum_{\hat{s}} d_H(s^{\star}, \hat{s}) \mathbb{P}(s^{\star} \rightarrow \hat{s}),
\end{equation}
where $d_H(s^{\star}, \hat{s})$ represents the Hamming distance between the binary representations of $s^{\star}$ and $\hat{s}$.

\section{BD-RIS Structure and Configuration}\label{sec:BDRIS_Circuit_Configuration}

In this section, we explore the circuit structure and configuration strategy of the BD-RIS, concentrating on its design to enable efficient passive beamforming.
A simplified example of the circuit structure for the fully-connected BD-RIS is presented in \cite{9913356} and illustrated in Fig. \ref{fig:BDRIS_Structure_and_Grouping}(a). As detailed in Section \ref{sec:system_model}, the BD-RIS elements are divided into two sectors: sector 1, facing the active antenna, captures the incident EM wave, while sector 2, facing outward, transmits the directed signal toward the UE. Each pair of back-to-back elements from the two sectors forms a single cell, as depicted in Fig. \ref{fig:BDRIS_Structure_and_Grouping}(a).

\subsection{Group-Connected Architecture and Circuit Complexity}\label{sec:GroupConnected}

To enable manipulation of both the amplitude and phase of the impinging EM wave, inter-element connections are crucial \cite{li2023reconfigurable, 9913356}. Extending the two-cell structure in Fig. \ref{fig:BDRIS_Structure_and_Grouping}(a) to an $M$-cell fully-connected BD-RIS significantly increases circuit complexity. To balance performance and circuit complexity, a $G$-group-connected structure can be adopted by dividing the $M$ cells into $G$ groups, with each group containing $\bar{M} = M / G$ cells. In this configuration, full connectivity is implemented within each group, while no connections exist between different groups \cite{9913356}. For simplicity, we assume that all groups consist of an equal number of cells $\bar{M}$.\footnote{It is worth noting that varying grouping strategies with dynamic group sizes can lead to different trade-offs between performance and circuit complexity \cite{10453384, 10159457}. However, investigating these strategies and their impact on passive beamforming performance lies beyond the scope of this paper and is deferred to future work.}
The set of cell indices for the $q$-th group is defined as $\mathcal{G}_q = \{(q-1)\bar{M} + 1, \dots, q\bar{M}\}$, where each group contains $\bar{M}$ consecutive cells. This grouping approach is referred to as the linear permutation throughout this paper. To aid visualization, Fig. \ref{fig:BDRIS_Structure_and_Grouping}(b) illustrates a BD-RIS configuration with $M = 36$, $G = 9$, and $\bar{M} = 4$.
Two notable special cases arise from this framework:
\begin{itemize}
    \item When $G = M$, the structure corresponds to a traditional transmissive RIS with a diagonal configuration scattering matrix, allowing only phase manipulation of the impinging signal, limiting its passive beamforming capabilities.
    \item Conversely, when $G = 1$, the configuration transitions to a fully-connected BD-RIS, offering the highest degree of freedom for designing the configuration scattering matrix and enabling superior passive beamforming performance.
\end{itemize}

{\color{black}
    \begin{figure}
        \centering
        \includegraphics[width = \columnwidth]{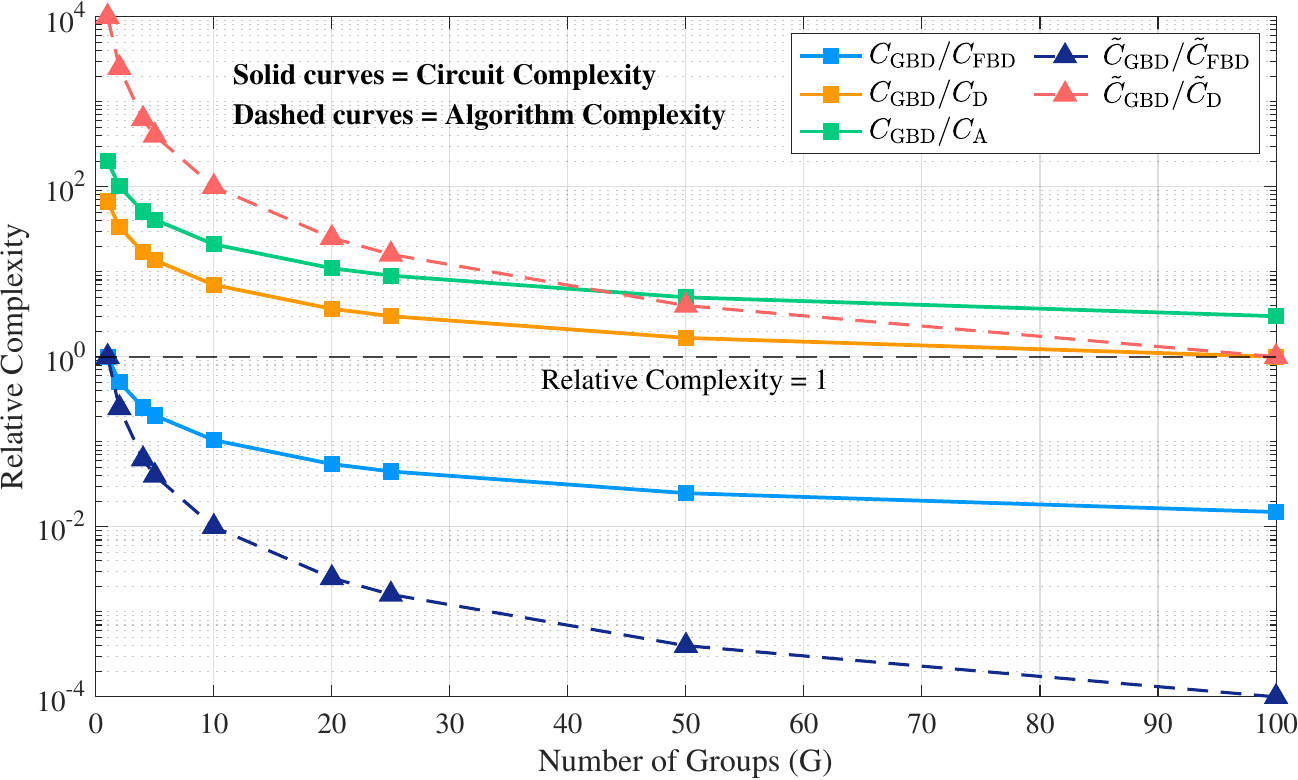}
        \caption{\textcolor{black}{Relative circuit (solid curves) and algorithm (dashed curves) complexity of} the group-connected BD-RIS with respect to fully-connected BD-RIS, D-RIS, and active antenna arrays as a function of the number of groups $G$, with total RIS size fixed at $M = 100$.}
        \label{fig:CircuitComplexity}
    \end{figure}

    \begin{table}[!b] \color{black}
        \centering
        \caption{\textcolor{black}{Circuit Complexity for Different RIS Architectures}}
        \begin{tabular}{l l }
            \toprule
            \textbf{Architecture} & \textbf{Circuit Topology Complexity} \\
            \toprule
            Active Antenna Array & $C_{\text{A}} = M$ \\
            D-RIS & $C_{\text{D}} = 3M$ \\
            BD-RIS (Fully-Connected) & $C_{\text{FBD}} = (2M + 1)M$ \\
            BD-RIS (Group-Connected) & $C_{\text{GBD}} = \left(2M/G + 1\right)M$ \\
            \bottomrule
        \end{tabular}
        \label{Tab:Circuit Complexity}
    \end{table}   

    Table \ref{Tab:Circuit Complexity} summarizes the circuit topology complexity associated with various antenna array architectures, including active antenna arrays equipped with a single phase shifter per antenna, traditional D-RIS structures, fully-connected BD-RIS configurations, and group-connected BD-RIS designs, as discussed in \cite{10530995, 9913356}. To offer a clearer understanding of how circuit complexity scales across these architectures, \textcolor{black}{the solid curves in} Fig. \ref{fig:CircuitComplexity} illustrate the relative complexity of the group-connected BD-RIS as a function of the number of groups $G$, with the RIS size fixed at $M = 100$. As shown in Fig. \ref{fig:CircuitComplexity}, the relative complexity with respect to the fully-connected BD-RIS, denoted $C_{\mathrm{GBD}}/C_{\mathrm{FBD}}$, decreases notably as $G$ increases, dropping to approximately $0.1$ for $G = 10$ and below $0.05$ for $G = 20$. Even at $G = 2$, the group-connected architecture achieves a significant complexity reduction of about $50\%$ compared to its fully-connected counterpart.

    Moreover, the curves for $C_{\mathrm{GBD}}/C_{\mathrm{D}}$ and $C_{\mathrm{GBD}}/C_{\mathrm{A}}$ remain above $1$ across all $G$, indicating that while group-connected BD-RIS significantly reduces internal circuit complexity, it still entails higher circuit complexity compared to D-RIS and active analog arrays.\footnote{\color{black} It is worth noting that the proposed grouping strategy also contributes to reducing control complexity. Specifically, by decreasing the number of configured elements, the grouping approach lowers the computational and signaling burden required for BD-RIS configuration during operation.}
}

\subsection{\color{black}BD-RIS Configuration and Precoding Assumptions}\label{Sec: BD-RIS Config}

The configuration scattering matrix of the $q$-th group, $\boldsymbol{\Omega}_q \in \mathbb{C}^{\bar{M} \times \bar{M}}$, is a full matrix that satisfies the unitary constraint $\boldsymbol{\Omega}_q^\mathsf{H} \boldsymbol{\Omega}_q = \boldsymbol{I}_{\bar{M}}$,
ensuring that each group operates without power loss.{\color{black} \footnote{\color{black} In line with general BD-RIS modeling frameworks \cite{10530995, 9913356}, we do not impose the symmetry condition, as it is not required for transmissive BD-RIS configurations. Nevertheless, incorporating symmetry may still offer practical advantages in implementation, such as reducing the number of control lines by enabling joint control of symmetric impedance pairs.}}
The BS maintains direct access to the BD-RIS, enabling the transmission of control signals to an embedded controller for configuring the BD-RIS. To obtain BD-RIS scattering matrix, we employ Takagi’s decomposition, as outlined in \cite{10187688}. Hence, utilizing the BS-RIS channel $\boldsymbol{g}$, the effective beamforming vector $\boldsymbol{b} \in \mathbb{C}^{M\times1}$ corresponding to the RIS-UE channel, and employing singular value decomposition (SVD) technique, the BD-RIS configuration for the general case of $G$ groups is detailed in Algorithm \ref{Alg:BDRIS_Configuration}.

\begin{algorithm}
\caption{BD-RIS Configuration for Linear Permutation.}
\label{Alg:BDRIS_Configuration}
\begin{algorithmic}[1] 
    \State Determine the beamforming vector $\boldsymbol{b}$ from the RIS-UE channel, utilizing one of the possible approaches, such as those outlined in this Section.
    \For{each group of cells $q \in \{ 1,\dots,G \}$}
        \State Constitute $\boldsymbol{a}_q = \boldsymbol{b}_{[(q-1)\bar{M} + 1: q\bar{M}]}$. 
        
        \State Constitute $\boldsymbol{g}_q = \boldsymbol{g}_{[(q-1)\bar{M} + 1: q\bar{M}]}$.

        \State Calculate $\boldsymbol{u}_q = \boldsymbol{g}_q / \lVert \boldsymbol{g}_q \rVert$. 

        \State Compute $\boldsymbol{v}_{q} = \boldsymbol{a}_q / \lVert \boldsymbol{a}_q \rVert$. 

        \State Derive the symmetric matrix $\boldsymbol{A}_q = \boldsymbol{v}_{q} \boldsymbol{u}_q^\mathsf{H} + (\boldsymbol{v}_{q} \boldsymbol{u}_q^\mathsf{H})^\mathsf{T}$. 
        
        \State Decompose $\boldsymbol{A}_q$ using SVD as $\boldsymbol{A}_q = \boldsymbol{U}_q \boldsymbol{\Sigma}_q \boldsymbol{V}_q^\mathsf{H}$. 
        
        \State Calculate $\boldsymbol{\nu}_q = \text{diag}(\boldsymbol{U}_q^\mathsf{H} \boldsymbol{V}_q^*)$.
        
        \State Determine $\boldsymbol{\rho}_q = \angle \boldsymbol{\nu}_q / 2$.

        \State Compute $\boldsymbol{\mathcal{Q}}_q = \boldsymbol{U}_q \, \text{diag}(e^{(\jmath \boldsymbol{\rho}_q)})$. 
        
        \State Takagi's decomposition: $\boldsymbol{A}_q = \boldsymbol{\mathcal{Q}}_q \boldsymbol{\Sigma}_q \boldsymbol{\mathcal{Q}}_q^\mathsf{T}$.
        
        \State Scattering matrix for the $q$-th group: $\boldsymbol{\Omega}_q = \boldsymbol{\mathcal{Q}}_q\boldsymbol{\mathcal{Q}}_q^\mathsf{T}$.
    \EndFor
    \State Assemble the overall BD-RIS scattering matrix as $\boldsymbol{\Omega} = \blkdiag(\boldsymbol{\Omega}_1,\dots,\boldsymbol{\Omega}_G)$.
\end{algorithmic}
\end{algorithm}

\textcolor{black}{The computational complexity of Algorithm \ref{Alg:BDRIS_Configuration} can be analyzed per group of size $\bar{M}$. Vector normalization, phase extraction, and diagonal operations scale linearly or quadratically with $\bar{M}$, while cubic terms arise from the SVD, Takagi’s decomposition, and matrix multiplications. Since the procedure is repeated across all $G$ groups, the overall complexity is on the order of $\mathcal{O}(G\bar{M}^3) = \mathcal{O}(M^3 / G^2)$. This indicates that Algorithm \ref{Alg:BDRIS_Configuration} scales cubically with the total number of RIS elements $M$, but decreases quadratically with the number of groups $G$. 
The dashed curves in Fig.~\ref{fig:CircuitComplexity} illustrate the relative algorithm complexity of the group-connected BD-RIS with respect to the fully-connected BD-RIS, i.e., $\tilde{C}_{\mathrm{GBD}}/\tilde{C}_{\mathrm{FBD}}$, and with respect to the diagonal RIS, i.e., $\tilde{C}_{\mathrm{GBD}}/\tilde{C}_{\mathrm{D}}$, for a fixed number of RIS elements $M = 100$. Compared to circuit complexity, it is evident that the group-connecting strategy achieves a more substantial reduction in algorithm complexity.}

A critical factor influencing the BD-RIS configuration and overall system performance is the stochastic nature of the RIS-UE channel. To ensure effective performance, the BS must access the efficient beamforming vector $\boldsymbol{b} \in \mathbb{C}^{M\times1}$ which aligns with the RIS-UE channel during each time block. Obviously, the availability of CSI (full or partial) is crucial for deriving such an effective beamforming vector. In this paper, we evaluate the performance of the proposed system under three different cases based on CSI availability:
\begin{enumerate}
    \item \textit{Case 1 (Full CSI is available at the BS)}. The BS, equipped with a large antenna array, can estimate the RIS-UE channel by processing pilot signals transmitted by the UE during the uplink phase \cite{10445164, 9716880, 9693928}. The optimal \textit{unconstrained} beamforming vector for the channel $\boldsymbol{h}$ is determined as $\boldsymbol{b} = \boldsymbol{v}_1$, where $\boldsymbol{v}_1 \in \mathbb{C}^{M\times1}$ represents the first column of the unitary matrix $\boldsymbol{V}~\in~\mathbb{C}^{M\times M}$ \cite{el2014spatially}. This matrix is obtained via the ordered SVD of the channel, expressed as $\boldsymbol{h}^{\mathsf{T}} =  \boldsymbol{\Sigma} \boldsymbol{V}^{\mathsf{H}}$, where $\boldsymbol{\Sigma}$ is a $1 \times M$ vector of singular values with non-negative elements. The beamforming vector $\boldsymbol{v}_1$ encapsulates the optimal complex weights—comprising both phase and magnitude—for the BD-RIS elements to maximize beamforming gain. However, in systems with constant modulus constraints, such as BS-integrated D-RIS or active analog beamforming antenna arrays utilizing analog phase shifter networks, only the phase information from $\boldsymbol{v}_1$ can be exploited.
    \item \textit{Case 2 (Full CSI is available at the UE)}. In general, the UE may handle channel estimation;\footnote{Although this work focuses on an emulated MISO system, extending it to an emulated MIMO system with an additional BD-RIS integrated at the UE is straightforward. In such scenario, the UE would also be capable of performing channel estimation.} however, feeding back the full CSI to the BS, particularly in systems with massive MIMO, is often impractical and can result in significant overhead \cite{el2014spatially}. To address this, the UE can utilize the spatially sparse precoding (SSP) scheme proposed in \cite{el2014spatially}. With this approach, the UE identifies the best match for the channel’s dominant eigenmode $\boldsymbol{v}_1$ from a predefined codebook. As the codebook is shared between the BS and UE, the UE only needs to transmit the index of the selected codeword, significantly reducing feedback overhead. In this paper, we utilize a codebook consisting of beams corresponding to equally spaced angles \cite{el2014spatially}, defined as follows:
    \begin{equation}
        \mathcal{B} = \{ \boldsymbol{a}(\phi,\theta) | \phi \in [-\pi:\frac{\pi}{36}:\pi], \theta \in [0:\frac{\pi}{36}:\frac{\pi}{2}] \}.
    \end{equation}
    The optimal beam alignment with $\boldsymbol{v}_1$ is determined by identifying the beam index that maximizes the correlation, computed as:
    \begin{equation}
        \{ i^{\star},j^{\star} \} = \arg \max_{i,j} \ | \boldsymbol{a}^{\mathsf{H}}(\phi_i,\theta_j) \boldsymbol{v}_1 |,
    \end{equation}
    where $\phi_i$ and $\theta_j$ represent the $i$-th and $j$-th quantized azimuth and elevation angles within the ranges $\phi = [-\pi:\frac{\pi}{36}:\pi]$ and $\theta = [0:\frac{\pi}{36}:\frac{\pi}{2}]$, respectively. Subsequently, the UE transmits the indices $\{ i^{\star}, j^{\star} \}$ to the BS via a limited feedback link, allowing the BS to compute the beamforming vector as $\boldsymbol{b} = \boldsymbol{a}(\phi_{i^{\star}},\theta_{j^{\star}})$.
    \item \textit{Case 3 (Partial CSI is available)}. In certain scenarios, full CSI may be unavailable, or obtaining partial CSI is preferred to minimize channel estimation overhead \cite{8891298, koc2020hybrid}. Angular channel information can be extracted using AOD estimation algorithms \cite{9086460}, allowing the identification of an effective transmission direction.\footnote{In order to emulate this process, we search across the available LOS and NLOS directions to maximize the transmitted signal in the target sub-channel while minimizing the leakage power into unintended sub-channels. Mathematically, the optimal direction $c^{\star}$ is determined by solving $c^{\star} = \arg \max_{c = 0, \dots, C } | \boldsymbol{h}_c^{\mathsf{T}} \boldsymbol{a}(\varphi_c, \vartheta_c) |/|(\sum_{i \neq c}  \boldsymbol{h}_i^{\mathsf{T}})\boldsymbol{a}(\varphi_c, \vartheta_c)|$; hence, the effective beamforming vector can be obtained as $\boldsymbol{b} = \boldsymbol{a}(\varphi_{c^{\star}},\vartheta_{c^{\star}})$. It is worth noting that the direction of the $c$-th NLOS cluster is assumed to align with its best effective path, as defined in \cite[equation (13)]{10176315}.} Subsequently, the effective channel $h_{\textrm{eff}}$, required for detection, can be readily estimated.
\end{enumerate}

\section{Channel Amplitude Variation: A Key Design Metric for BS-Integrated RIS Performance}\label{sec:CAV}
In this section, we take a closer look at the design parameters of the BS-integrated BD-RIS and analyze how these parameters impact system performance compared to traditional structures. Specifically, we focus on the BS-RIS channel, a deterministic component of the BS’s internal architecture, shaped by the BS’s design parameters.

According to the Rayleigh-Sommerfeld diffraction theory for near-field propagation, as presented in (\ref{eq:Rayleigh-Sommerfeld}), the amplitude of the channel coefficient corresponding to each element is inversely proportional to $d_m^3$, emphasizing a rapid decay in channel gain amplitude with increasing distance. 
Unlike traditional D-RIS, which is limited to phase compensation, BD-RIS can adjust both phase and amplitude variations. 
Consequently, when the signal emitted by the active antenna reaches the BD-RIS, it achieves alignment in both phase and amplitude across RIS elements. The primary performance degradation in BS-integrated BD-RIS compared to active antenna arrays arises from the multiplicative path loss due to the BS-RIS distance.
In contrast, D-RIS, with its phase-only adjustment capability, cannot align the amplitude variations across RIS elements. This limitation introduces additional performance degradation beyond the multiplicative path loss, as amplitude misalignment further reduces system efficiency.
This enhanced capability of BD-RIS provides greater flexibility in beamforming design, whereas the performance of D-RIS remains fundamentally constrained by its limited adjustment capabilities.

Since amplitude variations across RIS elements play a crucial role, we focus on this aspect in the remainder of this section. As previously mentioned, the amplitude of the channel gain corresponding to the $m$-th element is inversely proportional to $d_m^3$. The variations in $d_m$ are significantly affected by two key factors: the array aperture and the BS-RIS separation ($d_c$). The array aperture is directly proportional to the number of RIS elements; larger RIS arrays incorporate more elements, resulting in greater variations in $d_m$ and, consequently, more pronounced discrepancies in channel amplitudes across the elements. In contrast, as the BS-RIS separation ($d_c$) increases, the relative variations in $d_m$ decrease, leading to reduced differences in channel amplitudes across RIS elements.

\begin{figure}
    \centering
    \includegraphics[scale = 0.35]{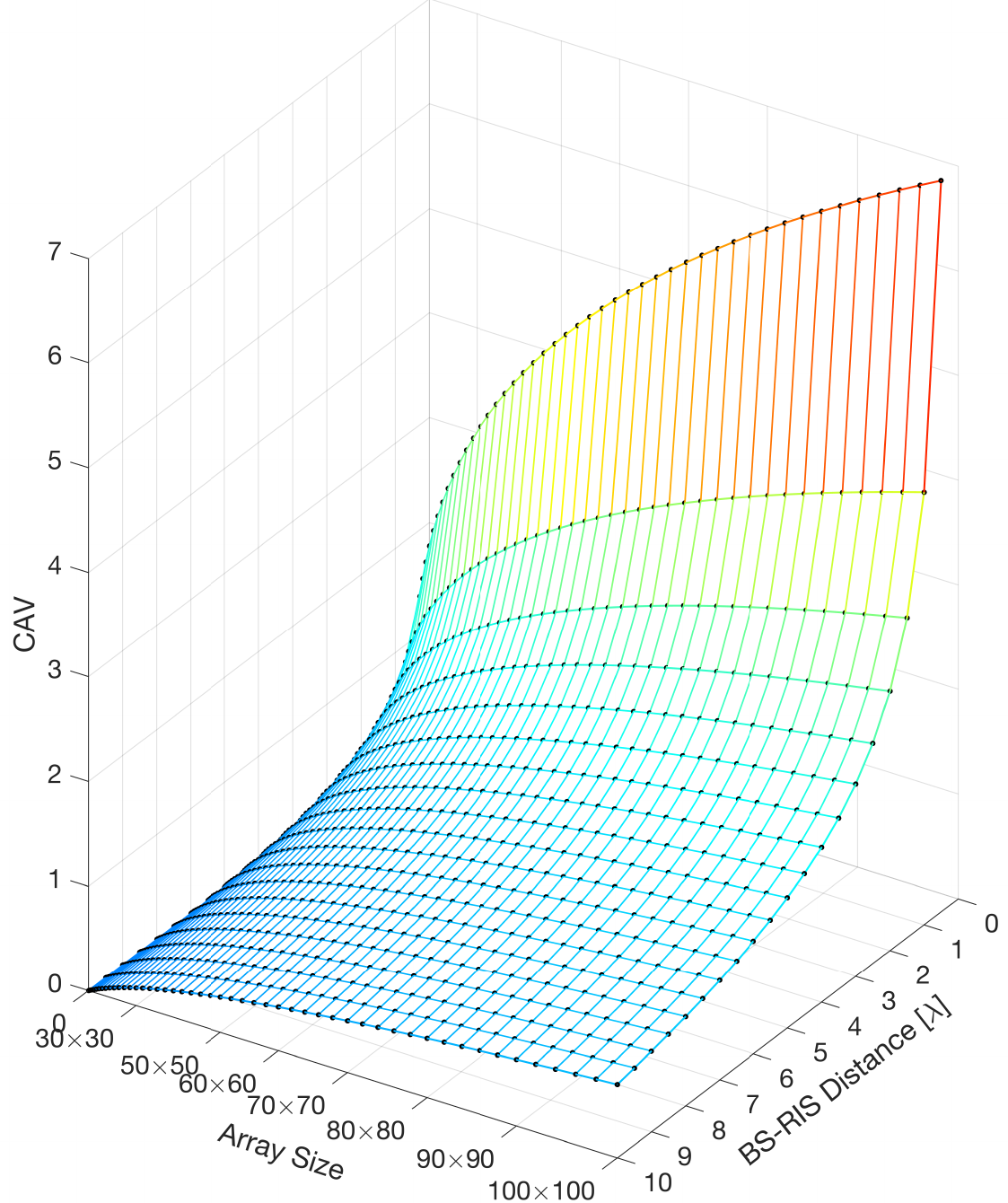}
    \caption{CAV for different array sizes and BS-RIS separations.}
    \label{fig:CAV}
\end{figure}

To quantify and visualize the relative spread of channel amplitude variations between the active antenna and the RIS elements, we introduce a novel metric called \textit{channel amplitude variation} (CAV). CAV quantifies the variability in channel amplitudes relative to their mean and is specifically defined for the BS-RIS channel vector $\boldsymbol{g}$. Mathematically, CAV is computed as the standard deviation of the channel amplitude components normalized by their mean, expressed as
\begin{equation}\label{eq:CAV}
    \mathrm{CAV} = \frac{\color{black} \sigma_{|\boldsymbol{g}|}}{\mu_{|\boldsymbol{g}|}},
\end{equation}
\textcolor{black}{where $\sigma_{|\boldsymbol{g}|}$ denotes the standard deviation of the amplitudes of the channel gains, formally expressed as
\begin{equation}\label{eq:std (CAV)}
    \sigma_{|\boldsymbol{g}|} = \sqrt{\frac{1}{M}\sum_{m = 1}^{M} (|[\boldsymbol{g}]_m| - \mu_{|\boldsymbol{g}|})^2},
\end{equation}
and $\mu_{|\boldsymbol{g}|}$ is the mean amplitude of the channel gains, given by}
\begin{equation}\label{eq:mean (CAV)}
    \mu_{|\boldsymbol{g}|} = \frac{1}{M} \sum_{m = 1}^M |[\boldsymbol{g}]_m|.
\end{equation}
The normalization by the mean ensures that the CAV metric provides a dimensionless and scale-invariant representation of amplitude variability, making it suitable for comparisons across different system configurations and BS-RIS separations.

Fig. \ref{fig:CAV} depicts the CAV as a function of both the array size and the BS-RIS separation ($d_c$). As discussed, the CAV increases with larger array sizes and smaller $d_c$. Note that the primary motivation for integrating RIS within the BS is to enable the implementation of larger arrays to achieve narrower beams, thereby reducing cost and power consumption. Additionally, minimizing the BS-RIS separation ($d_c$) is crucial to mitigate the impact of multiplicative path loss. Consequently, designing a system that aligns with these two characteristics is critical, underscoring the necessity of adopting BD-RIS to effectively manage amplitude variations.
Notably, as $d_c$ decreases, placing the RIS closer to the active antenna, the CAV exhibits greater sensitivity, further underlining the importance of advanced designs like BD-RIS to mitigate these effects.

{\color{black}
In the following, we analyze the received SNR of both BD-RIS and D-RIS integrated systems to theoretically characterize the performance gain of BD-RIS over D-RIS. Additional discussion can be found in \cite{9514409}. From the signal model in (\ref{eq:signal model}), the SNR is proportional to $|\boldsymbol{h}^\mathsf{T} \boldsymbol{\Omega} \boldsymbol{g}|^2$. 
This expression can be specialized for different RIS architectures as follows:

\begin{enumerate}[label=\textit{\roman*})]
    \item \textit{D-RIS, phase-only adjustment.} In this case, the optimal scattering matrix is given by
    \begin{equation}
        \boldsymbol{\Omega}^* = \diag (e^{\jmath \omega_1^*}, e^{\jmath \omega_2^*}, \dots, e^{\jmath \omega_M^*})
    \end{equation}
    where $\omega^*_m = - \arg ([\boldsymbol{h}]_m[\boldsymbol{g}]_m)$. After aligning the phases through the RIS, the resulting SNR is proportional to
    \begin{equation}\label{eq:SNR D-RIS}
        \mathrm{SNR}_{\mathrm{D-RIS}} \propto \left( \sum_{m=1}^M |[\boldsymbol{h}]_m | \,|[\boldsymbol{g}]_m| \right)^2.
    \end{equation}

    \item \textit{BD-RIS, joint phase-amplitude adjustment.} In this case, applying the Cauchy–Schwarz inequality yields
    \begin{equation}\label{eq:Cauchy-Schwarz}
        |\boldsymbol{h}^{\mathsf{T}} \boldsymbol{\Omega} \boldsymbol{g}|^2 \leq \| \boldsymbol{h} \|^2 \| \boldsymbol{\Omega} \boldsymbol{g} \|^2 \overset{a}{=} \| \boldsymbol{h} \|^2 \| \boldsymbol{g} \|^2,
    \end{equation}
    where $a$ follows from the unitary property of the scattering matrix, i.e., $\boldsymbol{\Omega}^\mathsf{H} \boldsymbol{\Omega} = \boldsymbol{I}$. From (\ref{eq:Cauchy-Schwarz}), the maximum SNR is achieved with appropriate selection of $\boldsymbol{\Omega}$. Therefore, the maximum SNR in this case is proportional to
    \begin{equation}\label{eq: SNR BD-RIS}
        \mathrm{SNR}_{\mathrm{BD-RIS}} \propto \| \boldsymbol{h} \|^2 \| \boldsymbol{g} \|^2.
    \end{equation}
\end{enumerate}

\begin{proposition}\label{prop:gainfloor}
For the proposed BS-integrated BD-RIS, the minimum guaranteed SNR gain of BD-RIS over D-RIS is
\begin{equation}
G_{\min} = 10\log_{10}\!\left(1 + \mathrm{CAV}^2\right).
\label{eq:Gmin}
\end{equation}
\end{proposition}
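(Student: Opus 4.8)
The plan is to compare the two SNR expressions derived just above the proposition, namely $\mathrm{SNR}_{\mathrm{BD\textrm{-}RIS}}\propto\|\boldsymbol{h}\|^2\|\boldsymbol{g}\|^2$ from \eqref{eq: SNR BD-RIS} and $\mathrm{SNR}_{\mathrm{D\textrm{-}RIS}}\propto(\sum_m|[\boldsymbol{h}]_m|\,|[\boldsymbol{g}]_m|)^2$ from \eqref{eq:SNR D-RIS}, and to lower-bound the ratio $\mathrm{SNR}_{\mathrm{BD\textrm{-}RIS}}/\mathrm{SNR}_{\mathrm{D\textrm{-}RIS}}$ by a quantity that depends only on $\boldsymbol{g}$ (through the CAV) and not on $\boldsymbol{h}$. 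The key observation is that the worst case for BD-RIS relative to D-RIS occurs when $\boldsymbol{h}$ is ``aligned'' with $\boldsymbol{g}$ in amplitude, i.e.\ when the D-RIS already extracts all the energy; the extreme case is $|[\boldsymbol{h}]_m|$ constant across $m$, for which the D-RIS sum becomes $|[\boldsymbol{h}]_1|\sum_m|[\boldsymbol{g}]_m|$ while the BD-RIS term is $|[\boldsymbol{h}]_1|^2 M\sum_m|[\boldsymbol{g}]_m|^2$, and the ratio reduces to $M\|\boldsymbol{g}\|^2/(\sum_m|[\boldsymbol{g}]_m|)^2$, a pure function of $\boldsymbol{g}$.

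The main steps, in order, are: (i) write the ratio $\mathrm{SNR}_{\mathrm{BD\textrm{-}RIS}}/\mathrm{SNR}_{\mathrm{D\textrm{-}RIS}} = \|\boldsymbol{h}\|^2\|\boldsymbol{g}\|^2/(\sum_m|[\boldsymbol{h}]_m|\,|[\boldsymbol{g}]_m|)^2$; (ii) apply Cauchy--Schwarz to the denominator in the form $\sum_m|[\boldsymbol{h}]_m|\,|[\boldsymbol{g}]_m|\le\|\boldsymbol{h}\|\,\|\boldsymbol{g}\|$ to conclude the ratio is $\ge 1$ in general, but observe this is too weak because it discards the structure of $\boldsymbol{g}$; (iii) instead, bound the denominator using the fact that $|[\boldsymbol{h}]_m|\le\max_k|[\boldsymbol{h}]_k|$ is the wrong direction, so rather regroup: since we want a bound independent of $\boldsymbol{h}$, take the infimum over $\boldsymbol{h}$ of the ratio and show it is attained (as argued above) at $|[\boldsymbol{h}]_m|$ constant, giving $\inf_{\boldsymbol h}\frac{\|\boldsymbol{h}\|^2\|\boldsymbol{g}\|^2}{(\sum_m|[\boldsymbol{h}]_m||[\boldsymbol{g}]_m|)^2}=\frac{M\|\boldsymbol{g}\|^2}{(\sum_m|[\boldsymbol{g}]_m|)^2}$; (iv) rewrite this quantity in terms of the CAV: note $\mu_{|\boldsymbol{g}|}=\frac1M\sum_m|[\boldsymbol{g}]_m|$ and $\sigma_{|\boldsymbol{g}|}^2=\frac1M\sum_m|[\boldsymbol{g}]_m|^2-\mu_{|\boldsymbol{g}|}^2$, so $\frac1M\|\boldsymbol{g}\|^2=\sigma_{|\boldsymbol{g}|}^2+\mu_{|\boldsymbol{g}|}^2$ and $(\frac1M\sum_m|[\boldsymbol{g}]_m|)^2=\mu_{|\boldsymbol{g}|}^2$, whence $\frac{M\|\boldsymbol{g}\|^2}{(\sum_m|[\boldsymbol{g}]_m|)^2}=\frac{\sigma_{|\boldsymbol{g}|}^2+\mu_{|\boldsymbol{g}|}^2}{\mu_{|\boldsymbol{g}|}^2}=1+\mathrm{CAV}^2$; (v) take $10\log_{10}(\cdot)$ of both sides to obtain $G_{\min}=10\log_{10}(1+\mathrm{CAV}^2)$.

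The step I expect to be the main obstacle is (iii): justifying rigorously that the worst-case $\boldsymbol{h}$ is the one with constant amplitude entries, and more importantly arguing why this ``minimum guaranteed gain'' is the meaningful quantity to report. One clean way is to fix the amplitude profile of $\boldsymbol{g}$ and observe that the ratio, as a function of the nonnegative vector $\boldsymbol{x}=(|[\boldsymbol{h}]_m|)_m$, is $\|\boldsymbol{x}\|^2\|\boldsymbol{g}\|^2/(\sum_m x_m|[\boldsymbol{g}]_m|)^2$; its infimum over $\boldsymbol{x}\neq\boldsymbol 0$ is a Rayleigh-quotient-type problem whose minimizer, by Cauchy--Schwarz run in the reverse (equality) direction combined with a convexity/Lagrange argument, is $x_m\propto 1/|[\boldsymbol{g}]_m|$ --- not constant --- unless one imposes that $\boldsymbol{h}$ only ranges over physically realizable channels. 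I would therefore frame the proposition as: for \emph{any} channel realization $\boldsymbol{h}$, the D-RIS SNR is at most $(\sum_m|[\boldsymbol h]_m|\,|[\boldsymbol g]_m|)^2$, and combining \eqref{eq:SNR D-RIS}, \eqref{eq: SNR BD-RIS} with the identity in (iv) shows that whenever $|[\boldsymbol h]_m|$ is (approximately) constant across elements --- which holds in the far field for a single dominant path, the regime of interest --- the gain is \emph{exactly} $1+\mathrm{CAV}^2$, and this serves as the guaranteed floor attributable purely to the BS geometry. I would make the ``constant $|[\boldsymbol h]_m|$'' assumption explicit in the proof, since it is already implicit in the surrounding discussion (the steering vectors in \eqref{eq:steering_vector} have unit-modulus entries, so a single-path $\boldsymbol{h}$ indeed has constant amplitude), and then steps (i), (ii), (iv), (v) are routine algebra.
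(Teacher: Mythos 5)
Your proof is correct and, in its final form, is essentially the paper's own argument: the paper likewise proves the proposition by specializing to the single-path case $|[\boldsymbol{h}]_m|=\beta$ for all $m$, computing the ratio $\|\boldsymbol{h}\|^2\|\boldsymbol{g}\|^2/\bigl(\sum_m|[\boldsymbol{h}]_m||[\boldsymbol{g}]_m|\bigr)^2 = \tfrac{1}{M}\sum_m|[\boldsymbol{g}]_m|^2/\mu_{|\boldsymbol{g}|}^2$, and invoking $\tfrac{1}{M}\sum_m|[\boldsymbol{g}]_m|^2=\sigma_{|\boldsymbol{g}|}^2+\mu_{|\boldsymbol{g}|}^2$; your insistence on making the constant-amplitude assumption explicit is well placed, since the paper's ``guaranteed minimum'' is indeed only established for that case (and the subsequent corollary confirms this reading). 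One correction to your step (iii): the unconstrained infimum of the ratio over nonnegative amplitude profiles $\boldsymbol{x}$ is attained at $x_m\propto|[\boldsymbol{g}]_m|$ (the Cauchy--Schwarz equality case), where the ratio equals $1$ (i.e., $0$ dB), not at $x_m\propto 1/|[\boldsymbol{g}]_m|$ — which only strengthens your point that $1+\mathrm{CAV}^2$ is not a true infimum over arbitrary $\boldsymbol{h}$ but the exact gain in the physically motivated single-path regime.
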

\begin{proof}
See Appendix~\ref{app:gainfloor}.
\end{proof}

\begin{corollary}
The SNR gain floor in (\ref{eq:Gmin}) arises when the RIS-UE channel consists of a single path, where the RIS-UE channel coefficients associated with all RIS elements have identical amplitudes. In this case, phase alignment alone is sufficient to compensate for the phase variations of the RIS-UE channel. Consequently, the BD-RIS gain over D-RIS is confined to compensating for CAV.
\end{corollary}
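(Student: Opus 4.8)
The plan is to form the ratio of the two specialized SNR expressions in (\ref{eq:SNR D-RIS}) and (\ref{eq: SNR BD-RIS}) and to evaluate it in the scattering configuration that is least favourable to BD-RIS. Writing the instantaneous SNR gain as
\begin{equation}\label{eq:gain_ratio_plan}
    G = \frac{\mathrm{SNR}_{\mathrm{BD-RIS}}}{\mathrm{SNR}_{\mathrm{D-RIS}}} = \frac{\|\boldsymbol{h}\|^2 \, \|\boldsymbol{g}\|^2}{\left( \sum_{m=1}^{M} |[\boldsymbol{h}]_m| \, |[\boldsymbol{g}]_m| \right)^2},
\end{equation}
the Cauchy--Schwarz inequality (\ref{eq:Cauchy-Schwarz}) already yields $G \geq 1$, with equality if and only if the amplitude profiles $\{|[\boldsymbol{h}]_m|\}$ and $\{|[\boldsymbol{g}]_m|\}$ are proportional across $m$. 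The first step is therefore to identify, for a fixed and deterministic $\boldsymbol{g}$, the RIS-UE amplitude profile that minimizes $G$ over the physically meaningful channel class, i.e., that makes the Cauchy--Schwarz slack as small as possible.

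The second step is to argue that this worst case corresponds to the single-path regime. From the array-response definition (\ref{eq:steering_vector}), every steering vector $\boldsymbol{a}(\phi,\theta)$ has entries of constant modulus $1/\sqrt{M}$; hence when $\boldsymbol{h}$ collapses to a single LOS (or single-ray) term, $|[\boldsymbol{h}]_m| = a$ is identical for all $m$. In that case the D-RIS, through its phase-only alignment $\omega_m^{*} = -\arg([\boldsymbol{h}]_m[\boldsymbol{g}]_m)$, already collects the full energy of $\boldsymbol{h}$, so the only remaining edge of BD-RIS is the compensation of the non-uniform $|[\boldsymbol{g}]_m|$ imposed by the BS geometry. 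Any additional scattering in $\boldsymbol{h}$ introduces amplitude non-uniformity in $\{|[\boldsymbol{h}]_m|\}$ that BD-RIS can additionally capitalize on while D-RIS cannot, which tends to enlarge $G$; this is what justifies reading the single-path value as the minimum guaranteed gain, independent of the scattering conditions.

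The third step is the substitution. Setting $|[\boldsymbol{h}]_m| = a$ in (\ref{eq:gain_ratio_plan}) gives $\|\boldsymbol{h}\|^2 = M a^2$ and $\sum_{m} |[\boldsymbol{h}]_m| \, |[\boldsymbol{g}]_m| = a M \mu_{|\boldsymbol{g}|}$, while the elementary identity $\tfrac{1}{M}\sum_m |[\boldsymbol{g}]_m|^2 = \sigma_{|\boldsymbol{g}|}^2 + \mu_{|\boldsymbol{g}|}^2$ combined with the CAV definition (\ref{eq:CAV}) gives $\|\boldsymbol{g}\|^2 = M \mu_{|\boldsymbol{g}|}^2 (1 + \mathrm{CAV}^2)$. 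Substituting these into (\ref{eq:gain_ratio_plan}),
\begin{equation}\label{eq:gain_value_plan}
    G = \frac{M a^2 \cdot M \mu_{|\boldsymbol{g}|}^2 (1 + \mathrm{CAV}^2)}{a^2 M^2 \mu_{|\boldsymbol{g}|}^2} = 1 + \mathrm{CAV}^2 ,
\end{equation}
and rendering this ratio in decibels gives $G_{\min} = 10\log_{10}(1 + \mathrm{CAV}^2)$, which is (\ref{eq:Gmin}); the accompanying corollary then follows by reading off the Cauchy--Schwarz equality condition in this single-path case.

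The step I expect to be the main obstacle is the second one: making rigorous the sense in which the constant-amplitude single-path profile is the worst case. Over a completely unrestricted $\boldsymbol{h}$ the ratio $G$ can be pushed all the way down to $1$ whenever $|[\boldsymbol{h}]_m| \propto |[\boldsymbol{g}]_m|$ happens to be realizable, so the ``guarantee'' has to be anchored to the geometric content of the gain rather than to a blind minimization over every conceivable $\boldsymbol{h}$: the term $\mathrm{CAV}^2$ is exactly the portion of the gain that originates from the BS design parameters (array aperture and BS--RIS separation $d_c$) through the deterministic $\boldsymbol{g}$, and is present regardless of the environment, whereas multipath contributes a further scattering-dependent term on top. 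I would therefore either state this decomposition as the operating premise --- consistent with the clustered mmWave model of (\ref{eq:ChannelModel}) and corroborated by the simulations --- or, for a strictly rigorous version, restrict the proposition to the single-path channel, for which (\ref{eq:gain_value_plan}) holds with exact equality.
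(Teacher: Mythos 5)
Your argument is correct and follows essentially the same route as the paper's own derivation (Appendix~\ref{app:gainfloor}): set $|[\boldsymbol{h}]_m|$ constant, as in the single-path steering-vector case, substitute into (\ref{eq:SNR D-RIS}) and (\ref{eq: SNR BD-RIS}), and use $\tfrac{1}{M}\sum_m |[\boldsymbol{g}]_m|^2 = \sigma_{|\boldsymbol{g}|}^2 + \mu_{|\boldsymbol{g}|}^2$ to obtain the ratio $1+\mathrm{CAV}^2$ and hence (\ref{eq:Gmin}). The worst-case concern you raise in your second step is a fair observation, but it is not resolved by the paper either --- the published proof likewise just evaluates the constant-amplitude case and labels it the floor --- so it does not constitute a gap relative to the paper's argument.
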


\begin{proposition}\label{prop:asymptotic}
For the proposed BS-integrated BD-RIS, when the RIS-UE channel is subject to rich scattering (i.e., Rayleigh fading), the SNR gain of BD-RIS over D-RIS asymptotically converges to
\begin{equation}\label{eq:Gmax}
G_{\max} \;=\; G_{min}+\; 10\log_{10}\!\Big(\frac{4}{\pi}\Big).
\end{equation}
\end{proposition}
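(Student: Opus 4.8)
The plan is to work directly with the instantaneous SNR expressions (\ref{eq:SNR D-RIS}) and (\ref{eq: SNR BD-RIS}) and form the gain
\begin{equation}
G \;=\; \frac{\mathrm{SNR}_{\mathrm{BD-RIS}}}{\mathrm{SNR}_{\mathrm{D-RIS}}} \;=\; \frac{\|\boldsymbol{h}\|^{2}\,\|\boldsymbol{g}\|^{2}}{\big(\sum_{m=1}^{M}|[\boldsymbol{h}]_{m}|\,|[\boldsymbol{g}]_{m}|\big)^{2}},
\end{equation}
noting that any common channel-scaling constant (such as the $\sqrt{M/(CL+1)}$ prefactor) cancels in the ratio. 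Since $\boldsymbol{g}$ is a deterministic quantity fixed by the BS geometry while the randomness resides entirely in $\boldsymbol{h}$, the statement reduces to a concentration argument: under Rayleigh fading the amplitudes $|[\boldsymbol{h}]_{m}|$ are i.i.d.\ Rayleigh, so both $\tfrac1M\|\boldsymbol{h}\|^{2}$ and $\tfrac1M\sum_{m}|[\boldsymbol{h}]_{m}|\,|[\boldsymbol{g}]_{m}|$ converge as $M\to\infty$, and one shows $G$ concentrates on the ratio of their limits (equivalently, one may compare mean SNRs, which sidesteps taking the ratio of two random quantities).

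Concretely, write $\sigma_{h}^{2}=\mathbb{E}\big[|[\boldsymbol{h}]_{m}|^{2}\big]$ and $\mu_{h}=\mathbb{E}\big[|[\boldsymbol{h}]_{m}|\big]$. By the strong law $\tfrac1M\|\boldsymbol{h}\|^{2}\to\sigma_{h}^{2}$, hence $\tfrac{1}{M^{2}}\|\boldsymbol{h}\|^{2}\|\boldsymbol{g}\|^{2}\to\sigma_{h}^{2}\,\overline{|g|^{2}}$, where $\overline{|g|^{2}}$ denotes $\tfrac1M\sum_{m}|[\boldsymbol{g}]_{m}|^{2}$. For the denominator, the summands $|[\boldsymbol{h}]_{m}||[\boldsymbol{g}]_{m}|$ are independent with mean $\mu_{h}|[\boldsymbol{g}]_{m}|$ and bounded variance (since $d_{m}\ge d_{c}>0$ keeps $|[\boldsymbol{g}]_{m}|$ bounded), so the sum's normalized mean is exactly $\mu_{h}\mu_{|\boldsymbol{g}|}$ and its variance is $O(1/M)$; Chebyshev then gives $\tfrac1M\sum_{m}|[\boldsymbol{h}]_{m}||[\boldsymbol{g}]_{m}|\to\mu_{h}\mu_{|\boldsymbol{g}|}$. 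Combining the two limits,
\begin{equation}
G \;\longrightarrow\; \frac{\sigma_{h}^{2}}{\mu_{h}^{2}}\cdot\frac{\overline{|g|^{2}}}{\mu_{|\boldsymbol{g}|}^{2}}.
\end{equation}

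Next I would evaluate the two factors. For a Rayleigh amplitude, $\mathbb{E}[|h|^{2}]/\mathbb{E}[|h|]^{2}=4/\pi$, which is the source of the $10\log_{10}(4/\pi)$ term. For the geometry factor, the elementary identity $\sigma_{|\boldsymbol{g}|}^{2}=\overline{|g|^{2}}-\mu_{|\boldsymbol{g}|}^{2}$ implied by (\ref{eq:std (CAV)}) and (\ref{eq:mean (CAV)}) gives $\overline{|g|^{2}}/\mu_{|\boldsymbol{g}|}^{2}=1+\sigma_{|\boldsymbol{g}|}^{2}/\mu_{|\boldsymbol{g}|}^{2}=1+\mathrm{CAV}^{2}$ by definition (\ref{eq:CAV}). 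Hence $G\to\tfrac4\pi\,(1+\mathrm{CAV}^{2})$, and applying $10\log_{10}(\cdot)$ together with $G_{\min}=10\log_{10}(1+\mathrm{CAV}^{2})$ from Proposition~\ref{prop:gainfloor} yields $G_{\max}=G_{\min}+10\log_{10}(4/\pi)$, which is (\ref{eq:Gmax}).

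The step I expect to be the main obstacle is making the asymptotic statement rigorous for the weighted denominator sum: because the weights $|[\boldsymbol{g}]_{m}|$ are deterministic but non-identical, one needs a mild regularity condition (boundedness of the weights---automatic here---together with $\mu_{|\boldsymbol{g}|}$ and $\overline{|g|^{2}}$ remaining bounded away from $0$ and $\infty$ as $M$ grows) so that the $O(1/M)$ variance bound actually suppresses the fluctuations. A related subtlety worth stating explicitly is that $G$ is a ratio of two random quantities, so one should either pass through the ratio of mean SNRs or invoke joint concentration of numerator and denominator rather than naively dividing limits; the same boundedness makes this legitimate.
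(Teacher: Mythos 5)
Your proposal is correct and follows essentially the same route as the paper's proof: both rest on the Rayleigh moment ratio $\mathbb{E}[|h|^2]/\mathbb{E}[|h|]^2 = 4/\pi$, the identity $\tfrac{1}{M}\sum_m |[\boldsymbol{g}]_m|^2 = \sigma_{|\boldsymbol{g}|}^2 + \mu_{|\boldsymbol{g}|}^2$, and retaining the leading-order term in $M$. The only difference is framing — the paper compares the ratio of expected SNRs (computing $\mathbb{E}[X^2]=(\mathbb{E}[X])^2+\mathbb{V}[X]$ for the D-RIS denominator and then dropping the $O(M)$ term), whereas you argue concentration of the instantaneous ratio; since you yourself note the mean-SNR alternative and the underlying moment calculations coincide, this is not a materially different argument.
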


\begin{proof}
See Appendix~\ref{app:asymptotic}.
\end{proof}

\begin{corollary}\label{cor:asymptotic}
The asymptotic gain in (\ref{eq:Gmax}) arises when the RIS-UE channel experiences rich scattering (i.e., Rayleigh fading). In this case, BD-RIS provides an additional fixed gain of $10\log_{10}(4/\pi)\!\approx\!1.05$ dB beyond the floor in Proposition~\ref{prop:gainfloor}, stemming from its ability to compensate for amplitude diversity in the RIS-UE channel.
\end{corollary}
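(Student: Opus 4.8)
The plan is to form the BD-RIS-over-D-RIS SNR gain as the ratio of the right-hand sides of (\ref{eq: SNR BD-RIS}) and (\ref{eq:SNR D-RIS}) --- the proportionality constants being identical, since both stem from $P|\boldsymbol{h}^{\mathsf T}\boldsymbol{\Omega}\boldsymbol{g}|^{2}/\sigma_n^{2}$ in (\ref{eq:signal model}) --- namely
\[
\Gamma(M)\;=\;\frac{\mathrm{SNR}_{\mathrm{BD\text{-}RIS}}}{\mathrm{SNR}_{\mathrm{D\text{-}RIS}}}\;=\;\frac{\|\boldsymbol{h}\|^{2}\|\boldsymbol{g}\|^{2}}{\big(\sum_{m=1}^{M}|[\boldsymbol{h}]_m|\,|[\boldsymbol{g}]_m|\big)^{2}},
\]
and then to evaluate $\lim_{M\to\infty}\Gamma(M)$ under rich scattering, so that $G_{\max}=10\log_{10}\big(\lim_{M\to\infty}\Gamma(M)\big)$. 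Under Rayleigh fading the entries $[\boldsymbol{h}]_m$ are i.i.d.\ $\mathcal{CN}(0,\sigma_h^{2})$ (the common variance $\sigma_h^{2}$ being immaterial as it cancels in $\Gamma$), so each $|[\boldsymbol{h}]_m|$ is Rayleigh with $\mathbb{E}\big[|[\boldsymbol{h}]_m|^{2}\big]=\sigma_h^{2}$ and $\mathbb{E}\big[|[\boldsymbol{h}]_m|\big]=\tfrac{\sqrt{\pi}}{2}\sigma_h$, hence $\mathbb{E}[|[\boldsymbol{h}]_m|^{2}]/\mathbb{E}[|[\boldsymbol{h}]_m|]^{2}=4/\pi$. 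The deterministic vector $\boldsymbol{g}$ enters only through its empirical amplitude moments, which by (\ref{eq:std (CAV)})--(\ref{eq:mean (CAV)}) satisfy $\tfrac1M\sum_m|[\boldsymbol{g}]_m|=\mu_{|\boldsymbol{g}|}$ and $\tfrac1M\sum_m|[\boldsymbol{g}]_m|^{2}=\mu_{|\boldsymbol{g}|}^{2}+\sigma_{|\boldsymbol{g}|}^{2}$.

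Dividing numerator and denominator of $\Gamma(M)$ by $M^{2}$, the first factor is $\tfrac1M\|\boldsymbol{g}\|^{2}=\mu_{|\boldsymbol{g}|}^{2}+\sigma_{|\boldsymbol{g}|}^{2}$ exactly, and the strong law of large numbers gives $\tfrac1M\|\boldsymbol{h}\|^{2}\to\sigma_h^{2}$. For the denominator I would show $\tfrac1M\sum_m|[\boldsymbol{h}]_m|\,|[\boldsymbol{g}]_m|\to\mathbb{E}[|[\boldsymbol{h}]_m|]\,\mu_{|\boldsymbol{g}|}=\tfrac{\sqrt{\pi}}{2}\sigma_h\,\mu_{|\boldsymbol{g}|}$: since $|[\boldsymbol{h}]_m|$ is independent of the deterministic weight $|[\boldsymbol{g}]_m|$, the mean of this weighted average equals $\mathbb{E}[|[\boldsymbol{h}]|]\,\mu_{|\boldsymbol{g}|}$, while its variance equals $\tfrac1M\,\mathbb{V}[|[\boldsymbol{h}]|]\,(1+\mathrm{CAV}^{2})\to 0$ provided $\mathrm{CAV}$ stays bounded as $M$ grows (Fig.~\ref{fig:CAV} shows CAV saturating with array size, so this holds). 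Assembling these limits,
\[
\Gamma(M)\;\xrightarrow[M\to\infty]{}\;\frac{\sigma_h^{2}\,(\mu_{|\boldsymbol{g}|}^{2}+\sigma_{|\boldsymbol{g}|}^{2})}{\big(\tfrac{\sqrt{\pi}}{2}\sigma_h\big)^{2}\mu_{|\boldsymbol{g}|}^{2}}\;=\;\frac{4}{\pi}\Big(1+\mathrm{CAV}^{2}\Big),
\]
and taking $10\log_{10}(\cdot)$, splitting the product logarithm, and using $G_{\min}=10\log_{10}(1+\mathrm{CAV}^{2})$ from Proposition~\ref{prop:gainfloor}, yields $G_{\max}=G_{\min}+10\log_{10}(4/\pi)$, i.e.\ (\ref{eq:Gmax}); the numerical value $10\log_{10}(4/\pi)\approx1.05$ dB of Corollary~\ref{cor:asymptotic} is then immediate.

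The step I expect to be the main obstacle is the cross-term limit, because $\sum_m|[\boldsymbol{h}]_m|\,|[\boldsymbol{g}]_m|$ is not a plain i.i.d.\ average but a triangular-array weighted sum whose weights $|[\boldsymbol{g}]_m|$ themselves vary with $M$ as the aperture grows; making it rigorous requires the mild regularity assumption that $\mathrm{CAV}$ (equivalently the normalized second moment of $\boldsymbol{g}$) stays bounded, which is also why the result is phrased as a large-$M$ asymptotic rather than an exact identity. A cleaner alternative that sidesteps the concentration bookkeeping is to compare expected SNRs directly: $\mathbb{E}[\|\boldsymbol{h}\|^{2}\|\boldsymbol{g}\|^{2}]=M^{2}\sigma_h^{2}(\mu_{|\boldsymbol{g}|}^{2}+\sigma_{|\boldsymbol{g}|}^{2})$, while expanding the square and using independence gives $\mathbb{E}\big[(\sum_m|[\boldsymbol{h}]_m||[\boldsymbol{g}]_m|)^{2}\big]=\sigma_h^{2}\|\boldsymbol{g}\|^{2}+\tfrac{\pi}{4}\sigma_h^{2}\big((\sum_m|[\boldsymbol{g}]_m|)^{2}-\|\boldsymbol{g}\|^{2}\big)$, whose leading term for large $M$ is $\tfrac{\pi}{4}\sigma_h^{2}M^{2}\mu_{|\boldsymbol{g}|}^{2}$; the ratio of these two expressions again tends to $\tfrac{4}{\pi}(1+\mathrm{CAV}^{2})$, reproducing the same $G_{\max}$. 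Everything else is routine: the Rayleigh moments, the identity $\tfrac1M\|\boldsymbol{g}\|^{2}=\mu_{|\boldsymbol{g}|}^{2}+\sigma_{|\boldsymbol{g}|}^{2}$, and the logarithm split.
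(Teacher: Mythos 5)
Your proposal is correct, and in fact it contains the paper's own argument verbatim as your ``cleaner alternative'': the paper proves Proposition~\ref{prop:asymptotic} by comparing \emph{expected} SNRs, writing $\mathbb{E}[X^2]=(\mathbb{E}[X])^2+\mathbb{V}[X]$ for $X=\sum_m|[\boldsymbol{h}]_m|\,|[\boldsymbol{g}]_m|$, obtaining $\mathbb{E}[\mathrm{SNR}_{\mathrm{D\text{-}RIS}}]\propto\tfrac{\pi}{4}M^2\sigma_h^2\mu_{|\boldsymbol{g}|}^2+M\sigma_h^2(\sigma_{|\boldsymbol{g}|}^2+\mu_{|\boldsymbol{g}|}^2)(1-\tfrac{\pi}{4})$ and $\mathbb{E}[\mathrm{SNR}_{\mathrm{BD\text{-}RIS}}]\propto M^2\sigma_h^2(\sigma_{|\boldsymbol{g}|}^2+\mu_{|\boldsymbol{g}|}^2)$, and then keeping the leading order in $M$ --- exactly your second computation, since $(\sum_m|[\boldsymbol{g}]_m|)^2=M^2\mu_{|\boldsymbol{g}|}^2$ and $\|\boldsymbol{g}\|^2=M(\sigma_{|\boldsymbol{g}|}^2+\mu_{|\boldsymbol{g}|}^2)$. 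Your primary route is genuinely different in flavor: you take the a.s.\ limit of the random ratio $\Gamma(M)$ via the law of large numbers rather than the ratio of expectations. That version is arguably a stronger statement (concentration of the gain itself, not just of its mean), but it requires the triangular-array regularity condition you flag (boundedness of the empirical moments of $\boldsymbol{g}$ as the aperture grows), which the paper's expectation-based route avoids entirely; the two limits coincide here precisely because the fluctuations you bound are $O(1/M)$ relative to the leading term. One small slip in the LLN step: the variance of $\tfrac1M\sum_m|[\boldsymbol{h}]_m|\,|[\boldsymbol{g}]_m|$ is $\tfrac1M\,\mathbb{V}[|[\boldsymbol{h}]|]\,(\mu_{|\boldsymbol{g}|}^2+\sigma_{|\boldsymbol{g}|}^2)$, not $\tfrac1M\,\mathbb{V}[|[\boldsymbol{h}]|]\,(1+\mathrm{CAV}^2)$ (you dropped a factor $\mu_{|\boldsymbol{g}|}^2$, i.e.\ you wrote the variance of the mean-normalized sum); this does not affect the conclusion since either expression vanishes as $M\to\infty$ under your boundedness assumption. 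The final assembly --- splitting $10\log_{10}\bigl(\tfrac{4}{\pi}(1+\mathrm{CAV}^2)\bigr)$ into $G_{\min}+10\log_{10}(4/\pi)$ --- matches the paper and immediately yields the corollary's $\approx 1.05$ dB increment.
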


}

{
\color{black}
Ultimately, as discussed in Section \ref{sec:GroupConnected}, adopting a group-connected architecture offers an effective means to reduce the BD-RIS circuit complexity. Propositions \ref{prop:gainfloor} and \ref{prop:asymptotic} have already established that both the floor and ceiling performances are governed by the $\mathrm{CAV}$. Consequently, as long as a group-connected strategy preserves the $\mathrm{CAV}$, these bounds remain unaffected despite modifications to the circuit structure. The next proposition formalizes this by showing that a symmetric grouping strategy ensures the $\mathrm{CAV}$ remains unchanged.

\begin{proposition}  \label{prop:SymmetricGeometryCAV}
    Let  $A = \{ |[\boldsymbol{g}]_m| : m=1,\dots,M \}$ be the set of channel amplitudes between the active antenna and the RIS elements. Suppose that $A$ can be partitioned into $K$ disjoint symmetric subsets $A_1, A_2, \dots, A_K$, where each subset contains entries (corresponding to different RIS element indices) having identical numerical values. Then, the CAV defined in (\ref{eq:CAV}) satisfies  
    \begin{equation}
        \mathrm{CAV}(A) = \mathrm{CAV}(A_1) = \cdots = \mathrm{CAV}(A_K).
    \end{equation}
\end{proposition}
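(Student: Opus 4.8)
The plan is to reduce the proposition to an elementary identity relating the mean and standard deviation of a finite collection to those of its equal-sized, identically-valued blocks, after first pinning down the hypothesis through the geometry of Section~\ref{sec:system_model}. By construction the amplitude $|[\boldsymbol{g}]_m|$ depends on the index $m$ only through the active-antenna-to-cell distance $d_m$, which in turn depends only on the in-plane offset $d_m'$ to the array center. Central symmetry of the planar array about the active-antenna axis therefore forces cells at mirror-image positions to share a common $d_m'$, hence a common amplitude. The operative reading of the clause ``each subset contains entries (corresponding to different RIS element indices) having identical numerical values'' is then that the $K$ symmetric subsets are \emph{mutually} identical as lists of amplitude values: each $A_k$ reproduces the same $\bar{M}=M/K$ numerical entries, merely realized at different (symmetry-related) element indices. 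I would open the proof by recording this consequence of symmetry, since it is exactly the structural fact on which the remaining bookkeeping rests, and I would read $\mathrm{CAV}(A_k)$ as the formula (\ref{eq:CAV}) with (\ref{eq:mean (CAV)}) and (\ref{eq:std (CAV)}) applied to the amplitudes lying in subset $k$.

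Writing the common value list as $b_1,\dots,b_{\bar{M}}$, so that every $A_k$ equals $\{b_1,\dots,b_{\bar{M}}\}$ as a multiset, I would first evaluate the per-subset statistics. The subset mean $\mu_k=\tfrac{1}{\bar{M}}\sum_{j=1}^{\bar{M}}b_j$ and subset variance $\sigma_k^2=\tfrac{1}{\bar{M}}\sum_{j=1}^{\bar{M}}(b_j-\mu_k)^2$ are manifestly independent of $k$; denoting their common values by $\mu_\star$ and $\sigma_\star^2$ gives $\mathrm{CAV}(A_k)=\sigma_\star/\mu_\star$ for every $k$.

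Next I would compute the same two quantities for the full set $A$, which has $M=K\bar{M}$ entries obtained as the disjoint union of the $K$ identical blocks. The global mean telescopes as $\mu_{|\boldsymbol{g}|}=\tfrac{1}{M}\sum_{k=1}^{K}\sum_{j=1}^{\bar{M}}b_j=\tfrac{K}{M}\sum_{j=1}^{\bar{M}}b_j=\mu_\star$, using $K\bar{M}=M$. The crucial point is that the global mean coincides with the common block mean $\mu_\star$, so the squared deviations entering (\ref{eq:std (CAV)}) for $A$ are taken about the \emph{same} center as those for each $A_k$. Hence $\sigma_{|\boldsymbol{g}|}^2=\tfrac{1}{M}\sum_{k=1}^{K}\sum_{j=1}^{\bar{M}}(b_j-\mu_\star)^2=\tfrac{K}{M}\sum_{j=1}^{\bar{M}}(b_j-\mu_\star)^2=\sigma_\star^2$, again by $K\bar{M}=M$. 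Dividing yields $\mathrm{CAV}(A)=\sigma_{|\boldsymbol{g}|}/\mu_{|\boldsymbol{g}|}=\sigma_\star/\mu_\star=\mathrm{CAV}(A_k)$, which is the asserted equality.

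I expect the main obstacle to be interpretive rather than algebraic, namely ensuring the hypothesis is read so that the conclusion actually holds. Parsed too literally, ``entries having identical numerical values'' could be taken to mean that each subset is internally constant; that reading forces $\sigma_k=0$ and hence $\mathrm{CAV}(A_k)=0$, so the claimed equality would collapse to $\mathrm{CAV}(A)=0$ and fail unless every amplitude in $A$ coincided. The substantive content, and the only reading consistent with the symmetric group-connected strategy of Section~\ref{sec:GroupConnected}, is the mutual identity of the blocks established in the first paragraph: each symmetric group reproduces the full amplitude distribution rather than a single value. I would therefore devote most of the write-up to justifying, via central symmetry, that the subsets carry a common amplitude list, and treat the coincidence of the global and block means—which is what makes the variance factor cleanly and leaves the mean-normalized dispersion invariant under replication—as the short computation above.
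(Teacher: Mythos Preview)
Your proposal is correct and follows essentially the same route as the paper's proof in Appendix~\ref{app:SymmetricGeometryCAV}: both establish that every subset carries the same mean $\mu_\star$ and variance $\sigma_\star^2$, then telescope the global sums over the $K$ blocks to obtain $\mu_{|\boldsymbol{g}|}=\mu_\star$ and $\sigma_{|\boldsymbol{g}|}^2=\sigma_\star^2$. The paper proceeds via a $K=2$ base case before generalizing, whereas you go directly to general $K$; your additional geometric justification for why the subsets share a common amplitude list, and your explicit discussion of the interpretive ambiguity in the hypothesis, are absent from the paper but do not alter the underlying argument.
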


\begin{proof} 
    See Appendix \ref{app:SymmetricGeometryCAV}.
\end{proof}
}

\section{Simulation Results and System Evaluation}\label{Sec:SimulationResutls}
This section presents the computer simulation results for the proposed BS-integrated BD-RIS, offering a comparative evaluation against considered benchmarks. The simulation setup is first outlined for a realistic mmWave street canyon environment.
Next, the beamforming performance is assessed, demonstrating the potential of the BD-RIS to outperform the traditional D-RIS structure and closely rival the performance of active analog beamforming antenna arrays. ABER and achievable rate analyses are conducted under diverse circumstances to comprehensively evaluate the system’s performance from multiple perspectives. Finally, it is shown that dividing the BD-RIS cells into groups according to their geometric symmetry around the active antenna allows for a reduction in circuit complexity while maintaining system performance.

\subsection{Simulation Setup}\label{sec:simulaion_setup}
We consider a P2P scenario in the street canyon environment where the distance between the BS-integrated BD-RIS and UE is considered $d = 20$ m \cite{akdeniz2014millimeter}. We consider a carrier frequency of $f_c = 28$ GHz; hence, we have a mmWave environment with a spatially sparse-scattered channel described in (\ref{eq:ChannelModel}).
Due to the highly dynamic nature of urban environments and the susceptibility of mmWave signals to blockages, the presence of a LOS path cannot be consistently guaranteed. To comprehensively evaluate the system’s performance under varying conditions, we consider two distinct scenarios:
\begin{itemize}
    \item \textit{Scenario 1}. A dominant LOS component is present alongside several NLOS clusters.

    \item \textit{Scenario 2}. The LOS is blocked, and communication relies on NLOS clusters, ensuring connectivity \textcolor{black}{in challenging scenarios, such as complex urban environments.}
\end{itemize}


\begin{table}[t]
    \centering
    \caption{Simulation Parameters.}
    \begin{tabular}{c c c}
        \toprule
         \textbf{Parameter} & \textbf{Description} & \textbf{Value} \\
        \toprule
        $f_c$ & Carrier frequency & $28$ GHz  \\
        $B$ & Communication bandwidth & $100$ MHz \textcolor{black}{\cite{10176315, raeisi2023plug}} \\
        $d$ & Communication range & $20$ m \textcolor{black}{\cite{akdeniz2014millimeter}} \\
        $d_x = d_y$ & RIS inter-element spacing 
        & $\lambda/2$ 
        \\
        $d_c$ & BS-RIS separation & $\lambda/2$ \\
        $A$ & Area of each RIS elements & $(\lambda/2)^2$  \\
        $M$ & BD-RIS size & $10 \times 10$ \\
        $C$ & Number of clusters & $8$ \textcolor{black}{\cite{10176315, el2014spatially}}  \\
        $L$ & Number of paths per cluster & $10$ \textcolor{black}{\cite{10176315, el2014spatially}}\\
        $\varsigma_c$ & Angular spread in $c$-th cluster & $7.5^{\circ}$ \textcolor{black}{\cite{10176315, el2014spatially}}\\
        $P$ & BS transmitted power & $20$ dBm \\
        $\mathcal{M}$ & $\mathcal{M}$-ary constellation order & $2$ \\
        $a_{\textrm{LOS}}$ & Reference LOS path loss & $61.4$ dB \textcolor{black}{\cite{akdeniz2014millimeter}} \\
        $a_{\textrm{NLOS}}$ & Reference NLOS path loss & $72$ dB \textcolor{black}{\cite{akdeniz2014millimeter}} \\
        $b_{\textrm{LOS}}$ & LOS path loss exponent & $2$ \textcolor{black}{\cite{akdeniz2014millimeter}} \\
        $b_{\textrm{NLOS}}$ & NLOS path loss exponent & $2.92$ \textcolor{black}{\cite{akdeniz2014millimeter}} \\
        $\sigma_{\xi,\textrm{LOS}}$ & LOS shadow fading severity & $5.8$ dB \textcolor{black}{\cite{akdeniz2014millimeter}} \\
        $\sigma_{\xi,\textrm{NLOS}}$ & NLOS shadow fading severity & $8.7$ dB \textcolor{black}{\cite{akdeniz2014millimeter}} \\
        Noise PSD & Noise power spectral density & $-174$ dBm/Hz \textcolor{black}{\cite{10176315, raeisi2023plug}} \\
        \bottomrule
    \end{tabular}
    \label{tab:SimulationParameters}
\end{table}

The path loss model for $i \in \{ \textrm{LOS}, \textrm{NLOS} \}$ path is defined as follows \cite{akdeniz2014millimeter}:
\begin{equation}\label{eq:PathLoss}
\textrm{PL}_i(d) \, [\mathrm{dB}] = a_i + 10 \, b_i \log_{10}(d) + \xi_i,
\end{equation}
where $a_i$ represents the reference path loss at the reference distance, $b_i$ is the path loss exponent that determines the rate of signal attenuation with distance, and $\xi_i \sim \mathcal{N}(0, \sigma_{\xi,i}^2)$ models the effect of shadowing, with $\sigma_{\xi,i}$ indicating the shadow fading severity.
\textcolor{black}{The parameters $a_i$, $b_i$, and $\sigma_{\xi,i}$, summarized in Table \ref{tab:SimulationParameters}, are selected based on the experimental campaign conducted in a dense urban environment in New York City \textcolor{black}{\cite{akdeniz2014millimeter}}.}
The noise power spectral density (PSD) is assumed to be $-174 \, \mathrm{dBm/Hz}$ \textcolor{black}{\cite{10176315, raeisi2023plug}}, and the system bandwidth (BW) is set to $B = 100 \, \mathrm{MHz}$ \textcolor{black}{\cite{10176315, raeisi2023plug}}. Consequently, the noise power is calculated as $\sigma_n^2 = \mathrm{PSD} + 10 \log_{10}(B) = -94 \, \mathrm{dBm}$. 
The number of NLOS clusters and paths is set to $C = 8$ and $L = 10$, respectively, following the parameters in \textcolor{black}{\cite{10176315, el2014spatially}}. The angular spread of the NLOS clusters is assumed to be $\varsigma_c = 7.5^\circ$ as specified in \textcolor{black}{\cite{10176315}}. Additionally, a binary phase shift keying (BPSK) signaling scheme is employed for ABER analysis. Unless specified otherwise, the default BD-RIS configuration in all simulations is fully-connected, showcasing the maximum potential of BD-RIS.
A summary of the default system parameter values\textcolor{black}{, carefully selected based on established works in the literature,} is presented in Table \ref{tab:SimulationParameters}.

To assess the performance of the proposed BS-integrated BD-RIS, we consider two benchmarks as outlined below:
\begin{itemize}
    \item \textit{Benchmark 1 (D-RIS)}. A traditional transmissive RIS integrated at the BS \cite{9598898, 9685418, 9991837, 10694297, 10158690, 10515204, 10679332, 10643881, 10622963, 10557708, 9913356} features a diagonal phase shift matrix as its defining characteristic. \textcolor{black}{This architecture is functionally equivalent to a transmission-only STAR-RIS configuration \cite{9570143} and also corresponds to a single-layer SIM \cite{10515204}.} In this benchmark, the phase shift matrix is given by $\boldsymbol{\Omega} = \diag \left( e^{-\jmath \arg (\boldsymbol{g} \odot \boldsymbol{b}^{*})} \right) $.

    \item \textit{Benchmark 2 (Active Analog Beamforming Antenna Array)}.  The BS is configured with a planar active analog beamforming antenna array comprising $M_x \times M_y$ elements connected to a network of analog single phase shifters \cite{yu2018hardware}. In this configuration, the analog beamforming vector is directly applied to the phase shift network. It is important to note that the analog phase shift network imposes a constant modulus constraint on the beamforming weights. Consequently, when employing SVD, a phase-only beamforming vector is used, calculated as $\boldsymbol{b} = e^{\jmath \arg(\boldsymbol{v}_1)}$. The signal model corresponding to this benchmark can be expressed as:
    \begin{equation}
        y = \sqrt{P} \boldsymbol{h}^{\mathsf{T}} \boldsymbol{b} s + n.
    \end{equation}
\end{itemize}

\begin{figure}
    \centering
    \includegraphics[width=\columnwidth]{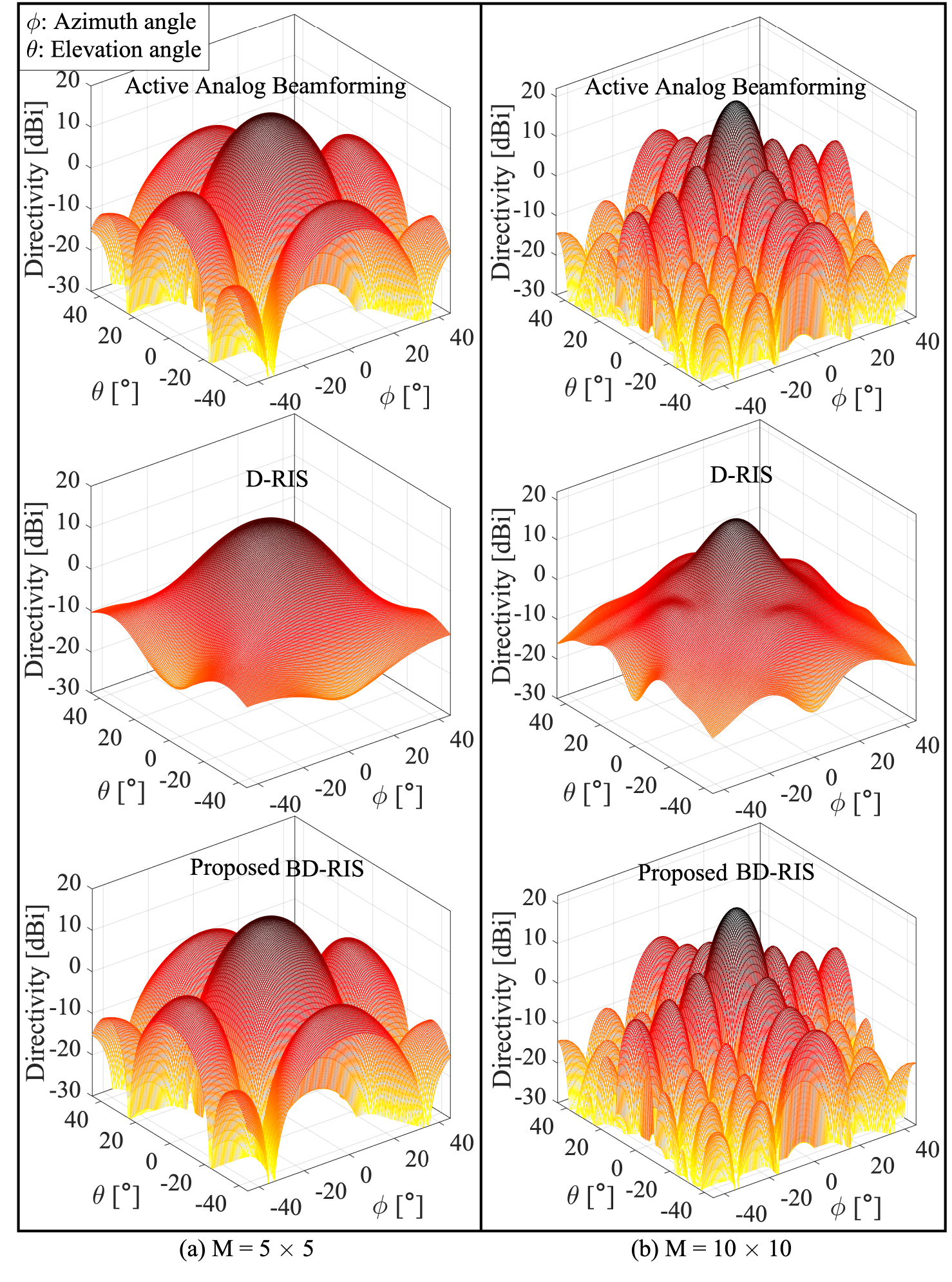}
    \caption{Beamforming performance of different array structures, \textcolor{black}{highlighting the identical beam pattern of the BD-RIS and active analog beamforming, in contrast to the} \textcolor{black}{notably degraded beamforming gain of the D-RIS:} (a) beam patterns for $M = 5 \times 5$ array, (b) beam patterns for $M = 10 \times 10$ array.}
    \label{fig:Beamforming_ArraySize}
\end{figure}

\begin{table}[!b]
    \centering
    \caption{\scriptsize Beamforming Performance Metrics for Different BS Structures}
    \resizebox{\columnwidth}{!}{%
        \begin{tabular}{|c|c|c|c|c|c|c|}
            \hline
            \multirow{2}{*}{} & \multicolumn{2}{c|}{\textbf{Active Array}} & \multicolumn{2}{c|}{\textbf{D-RIS}} & \multicolumn{2}{c|}{\textbf{BD-RIS}} \\ \cline{2-7}
                              & $\boldsymbol{5 \times 5}$ & $\boldsymbol{10 \times 10}$ & $\boldsymbol{5 \times 5}$ & $\boldsymbol{10 \times 10}$ & $\boldsymbol{5 \times 5}$ & $\boldsymbol{10 \times 10}$ \\ \hline
            \textbf{PPD [dBi]} & 15.198 & 21.548 & 13.577 & 17.602 & 15.198 & 21.548 \\ \hline
            \textbf{HPPD [dBi]} & 12.119 & 18.669 & 10.422 & 14.645 & 12.119 & 18.669 \\ \hline
            \textbf{HPBW [$^\circ$]}      & 21       & 10       & 27       & 15       & 21       & 10       \\ \hline
        \end{tabular}
    }
    \label{tab:beamforming_metrics}
\end{table}

\subsection{Beamforming Performance Analysis}
\label{sec:BeamformingAnalysis}

This subsection evaluates the beamforming performance of the proposed BS-integrated BD-RIS under various system configurations, serving as a foundation for understanding the system’s behavior from other perspectives. As stated in (\ref{eq:signal model}), $\boldsymbol{\zeta} = \boldsymbol{\Omega} \boldsymbol{g}$ represents the effective beamforming vector at the RIS’s transmit terminal. Here, the matrix $\boldsymbol{\Omega}$ not only defines the RIS type (D-RIS or BD-RIS) but also specifies the grouping strategy employed in the BD-RIS configuration. To simplify the analysis, we consider a directional beam focused at $\phi = \theta = 0$ as the beamforming vector for configuring the BD-RIS/D-RIS, defined as $\boldsymbol{b} = \boldsymbol{a}(0,0)$.

\begin{figure}
    \centering
    \includegraphics[width=\columnwidth]{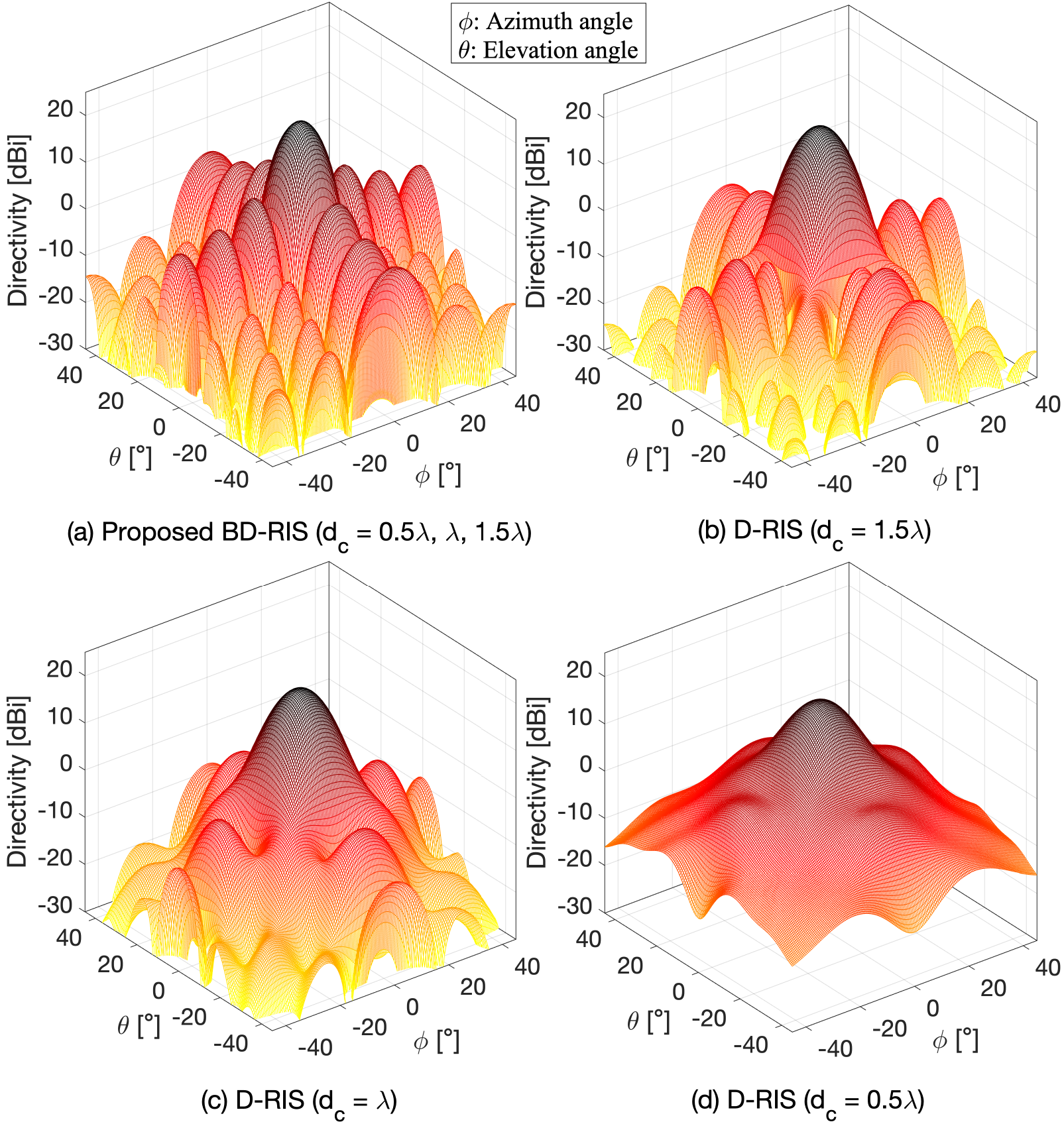}
    \caption{Beamforming performance comparison between BS-integrated BD-RIS and D-RIS for varying BS-RIS separations ($d_c$): (a) BD-RIS with $d_c = 0.5\lambda, \lambda, 1.5\lambda$, \textcolor{black}{indicating the robustness of BD-RIS beamforming gain with changing BS-RIS separation, while} (b) D-RIS with $d_c = 1.5\lambda$, (c) D-RIS with $d_c = \lambda$, and (d) D-RIS with $d_c = 0.5\lambda$\textcolor{black}{, demonstrating that placing D-RIS closer to the active antenna leads to degraded beamforming performance.}}
    \label{fig:Beamforming_dc}
\end{figure}

Fig. \ref{fig:Beamforming_ArraySize} presents the beamforming gain of the proposed BS-integrated BD-RIS compared with the benchmarks, i.e., the D-RIS and the active analog beamforming antenna array.
Overall, increasing the array size results in sharper beams with higher directivity across all structures. 
However, the D-RIS exhibits lower directivity and wider beams compared to its counterparts, while the BD-RIS achieves a beamforming performance comparable to that of the active analog beamforming antenna array. This superior performance of the BD-RIS is attributed to its fully-connected structure, which enables it to effectively compensate for the CAV.
The numerical metrics, including peak point directivity (PPD), half-power point directivity (HPPD), and half-power beamwidth (HPBW), extracted from Fig. \ref{fig:Beamforming_ArraySize} for various BS structures, are summarized in Table \ref{tab:beamforming_metrics}.

Fig. \ref{fig:Beamforming_dc} illustrates the beamforming gain of the BS-integrated BD-RIS and D-RIS for varying BS-RIS separations ($d_c$). Thanks to its ability to adjust both phase and amplitude, the BD-RIS maintains a consistent beam pattern regardless of $d_c$, while the D-RIS demonstrates reduced directivity and increased HPBW as $d_c$ decreases.
As shown in Fig. \ref{fig:CAV}, the CAV increases as  $d_c$  decreases, posing a challenge for D-RIS since it cannot compensate for the elevated CAV. 
The beamforming metrics for these configurations are summarized in Table \ref{tab:beamforming_metrics_distances}.
It is crucial to highlight that as $d_c$ increases, the multiplicative path loss becomes more pronounced, which detrimentally affects the overall system performance. Therefore, ensuring a high beamforming gain while minimizing $d_c$ is essential for optimizing performance, as elaborated in Section \ref{Sec:Simulation_AR}.

\begin{table}[!b]
\centering
\caption{Beamforming Performance Metrics for Different BS-RIS Separations}
\label{tab:beamforming_metrics_distances}
\begin{tabular}{|c|c|c|c|c|}
\hline
\multirow{2}{*}{} & \multicolumn{3}{c|}{\textbf{D-RIS}} & \textbf{BD-RIS} \\ \cline{2-5} 
                  & $\boldsymbol{d_c = 0.5\lambda}$ & $\boldsymbol{d_c = \lambda}$ & $\boldsymbol{d_c = 1.5\lambda}$ & \textbf{All} $\boldsymbol{d_c}$ \\ \hline
\textbf{PPD [dBi]} & 17.602 & 20.012 & 20.868 & 21.548 \\ \hline
\textbf{HPPD [dBi]} & 14.645 & 17.004 & 17.870 & 18.669 \\ \hline
\textbf{HPBW (\(^{\circ}\))} & 15 & 13 & 12 & 10 \\ \hline
\end{tabular}%
\end{table}



\begin{figure}
    \centering
    \includegraphics[width=\columnwidth]{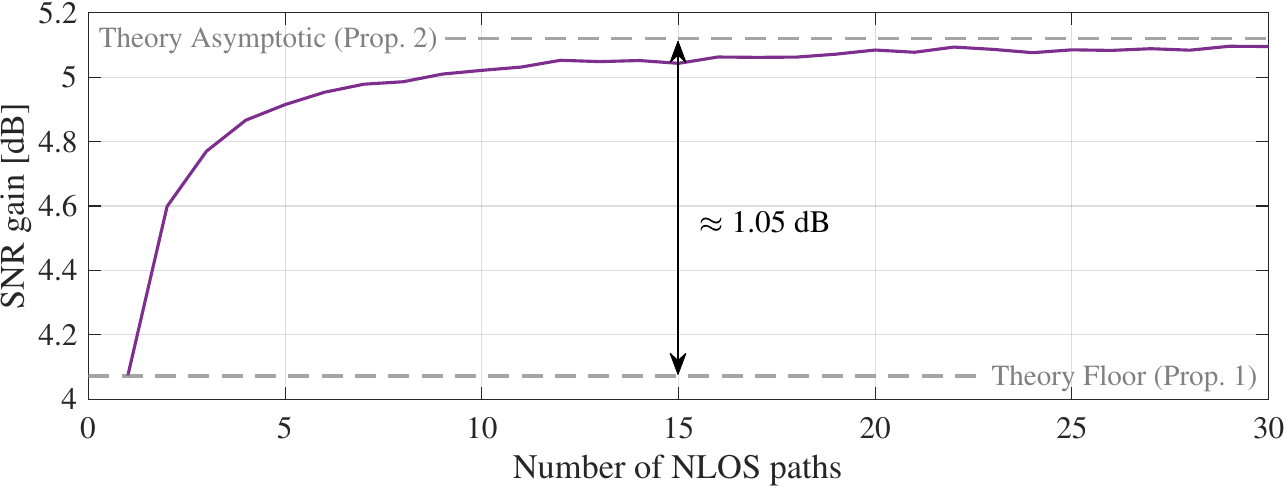}
    \caption{\color{black}SNR gain of BD-RIS over D-RIS for different numbers of NLOS clusters, illustrating the transition from sparse to rich scattering conditions. A single path per cluster is assumed in all NLOS scenarios ($L = 1$), and the RIS is configured according to Case $1$ described in Section \ref{Sec: BD-RIS Config}. \textcolor{black}{The simulation results are tightly bracketed by the theoretical bounds: the guaranteed floor in Proposition~\ref{prop:gainfloor} and the asymptotic limit in Proposition~\ref{prop:asymptotic}.}}
    \label{fig:SNR_gain}
\end{figure}

\subsection{\color{black}SNR Performance Analysis}

Fig. \ref{fig:SNR_gain} illustrates the SNR gain of BD-RIS over D-RIS, defined as $10 \log_{10}\left({\text{SNR}_{\text{BD-RIS}}}/{\text{SNR}_{\text{D-RIS}}}\right)$, across varying scattering conditions, ranging from a sparse-scattering regime (with a single NLOS path) to a rich-scattering environment \textcolor{black}{(approximated by Rayleigh fading)}. As shown, the SNR gain is at its minimum under sparse-scattering conditions; however, as the scattering environment becomes richer, BD-RIS demonstrates increasing performance gains over D-RIS. 
\textcolor{black}{These results tightly follow the theoretical bounds derived in Proposition~\ref{prop:gainfloor} and Proposition~\ref{prop:asymptotic}, which establish a gain floor and an asymptotic ceiling, respectively. For the assumed parameters in Table \ref{tab:SimulationParameters}, the CAV is approximately $1.25$ (also reflected in Fig. \ref{fig:Geneal_Achievable_Rate_Analysis}), yielding a theoretical floor of $G_{\min} \approx 4.07$ dB and an asymptotic ceiling of $G_{\max} \approx 5.12$ dB. The simulation curve in Fig. \ref{fig:SNR_gain} aligns closely with these bounds, confirming both the guaranteed minimum gain due to CAV compensation in the BS-RIS link and the additional fixed $\sim 1.05$ dB improvement under rich-scattering conditions.}

\subsection{ABER Analysis}

This subsection analyzes the ABER performance of the proposed BS-integrated BD-RIS across different LOS availability scenarios and three distinct precoding schemes. The results are benchmarked against D-RIS and active analog beamforming antenna arrays. Furthermore, as illustrated in Fig. \ref{fig:ABER}, the theoretical bound derived in (\ref{eq:Theoretical Bound}) provides a tight upper limit, validating the accuracy of our ABER simulations.

\begin{figure}
    \centering
    \includegraphics[width=\columnwidth]{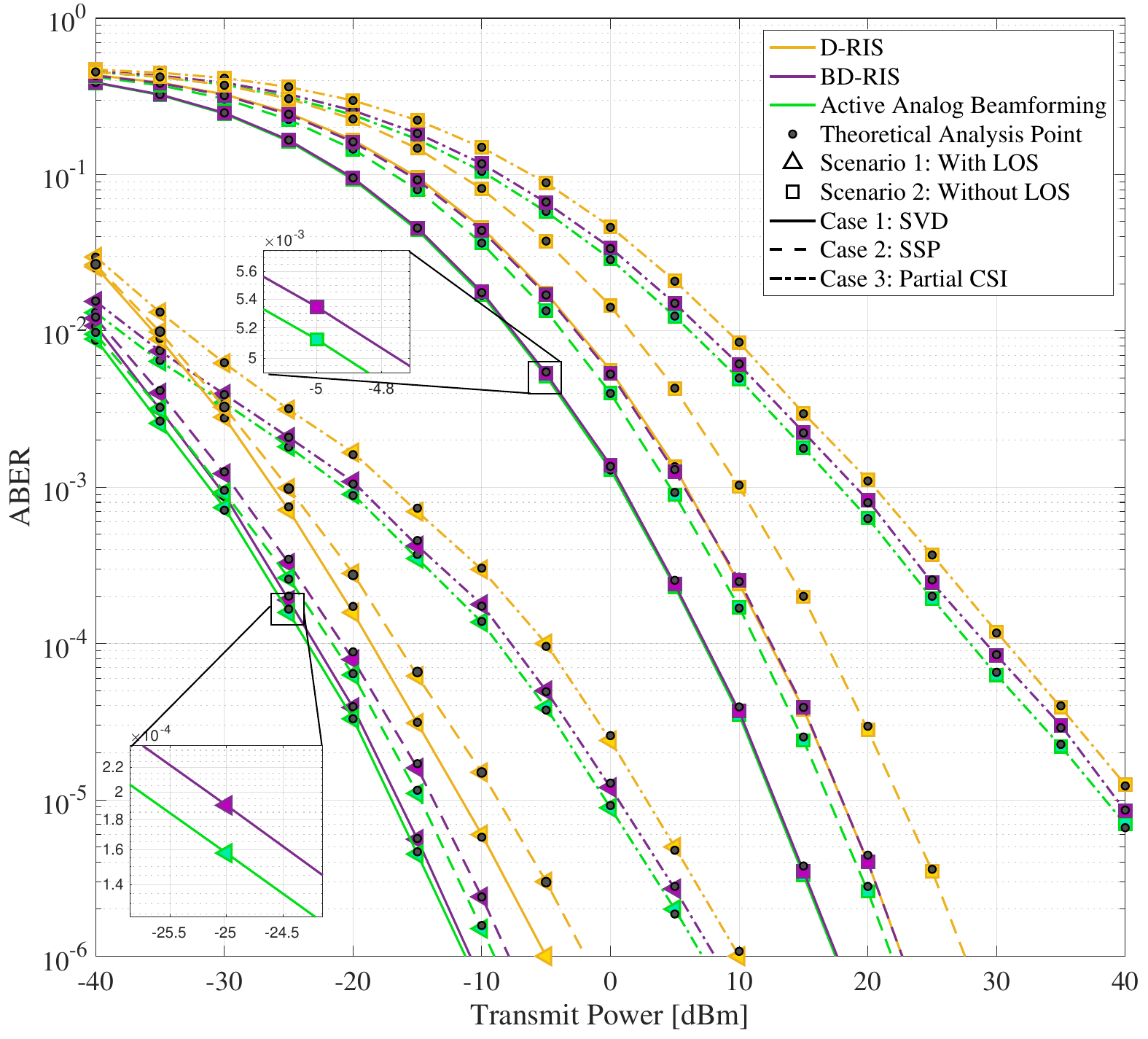}
    \caption{ABER performance of the BS-integrated BD-RIS compared to D-RIS and active analog beamforming antenna array under varying LOS conditions and precoding schemes, with theoretical validation.}
    \label{fig:ABER}
\end{figure}

As shown in Fig. \ref{fig:ABER}, the BD-RIS achieves performance close to that of the active analog beamforming antenna array across all evaluated scenarios and cases, demonstrating its robustness in delivering efficient beamforming gain under varying conditions. 
As outlined in Algorithm \ref{Alg:BDRIS_Configuration}, the beamforming vector $\boldsymbol{b}$ is employed in the BD-RIS configuration as a representation of the channel $\boldsymbol{h}$. Notably, in Case 1, $\boldsymbol{b}$ incorporates both the amplitude and phase variations. However, only the BD-RIS can fully utilize these amplitude and phase variations, whereas the benchmarks, including the active analog beamforming antenna array, are restricted to exploiting only the phase components of $\boldsymbol{b}$. This capability provides the BD-RIS with one more advantage in Case 1, resulting in ABER performance that is closer to that of the active analog beamforming antenna array compared to other cases.
In Scenario 2, where the channel $\boldsymbol{h}$ is entirely composed of NLOS components, the amplitude variations across the RIS elements are more pronounced. This further improves the ABER performance of the BD-RIS in Case 1, narrowing the gap with the active analog beamforming antenna array even more compared to Scenario 1.
In other cases, the performance gap between the BD-RIS and the active analog beamforming antenna array is solely attributed to the multiplicative path loss caused by the physical distance between the active antenna and the RIS elements.

On the other hand, the D-RIS, constrained to phase-only manipulation of the impinging signal, exhibits greater ABER performance degradation across various scenarios and cases due to both multiplicative path loss and reduced beamforming gain. As illustrated in Fig. \ref{fig:Beamforming_ArraySize} and summarized in Table \ref{tab:beamforming_metrics}, the D-RIS generates wider beams, leading to increased signal leakage into unintended paths. Consequently, as a portion of the signal experiences higher fading in these unintended paths, the ABER increases for the D-RIS configuration.

These results highlight the limitations of D-RIS and underscore the significant advantages of BD-RIS in achieving efficient beamforming gain and improved ABER performance.

\subsection{Achievable Rate Analysis}\label{Sec:Simulation_AR}


This subsection provides a detailed analysis of the achievable rate performance as a function of transmit power, array size, and BS-RIS separation ($d_c$). The achievable rate is calculated as follows:
\begin{equation}
    R = \mathbb{E}_{\boldsymbol{h}} \left[ \log_2 \left( 1 + \frac{|\sqrt{P} \boldsymbol{h}^{\mathsf{T}} \boldsymbol{\zeta} x|^2}{\sigma_n^2} \right) \right],
\end{equation}
where $x \sim \mathcal{CN}(0,1)$ is the transmitted symbol following a complex normal distribution with unit variance, allowing the analysis to be generalized for any potential constellation.


\begin{figure}
    \centering
    \includegraphics[width = \columnwidth]{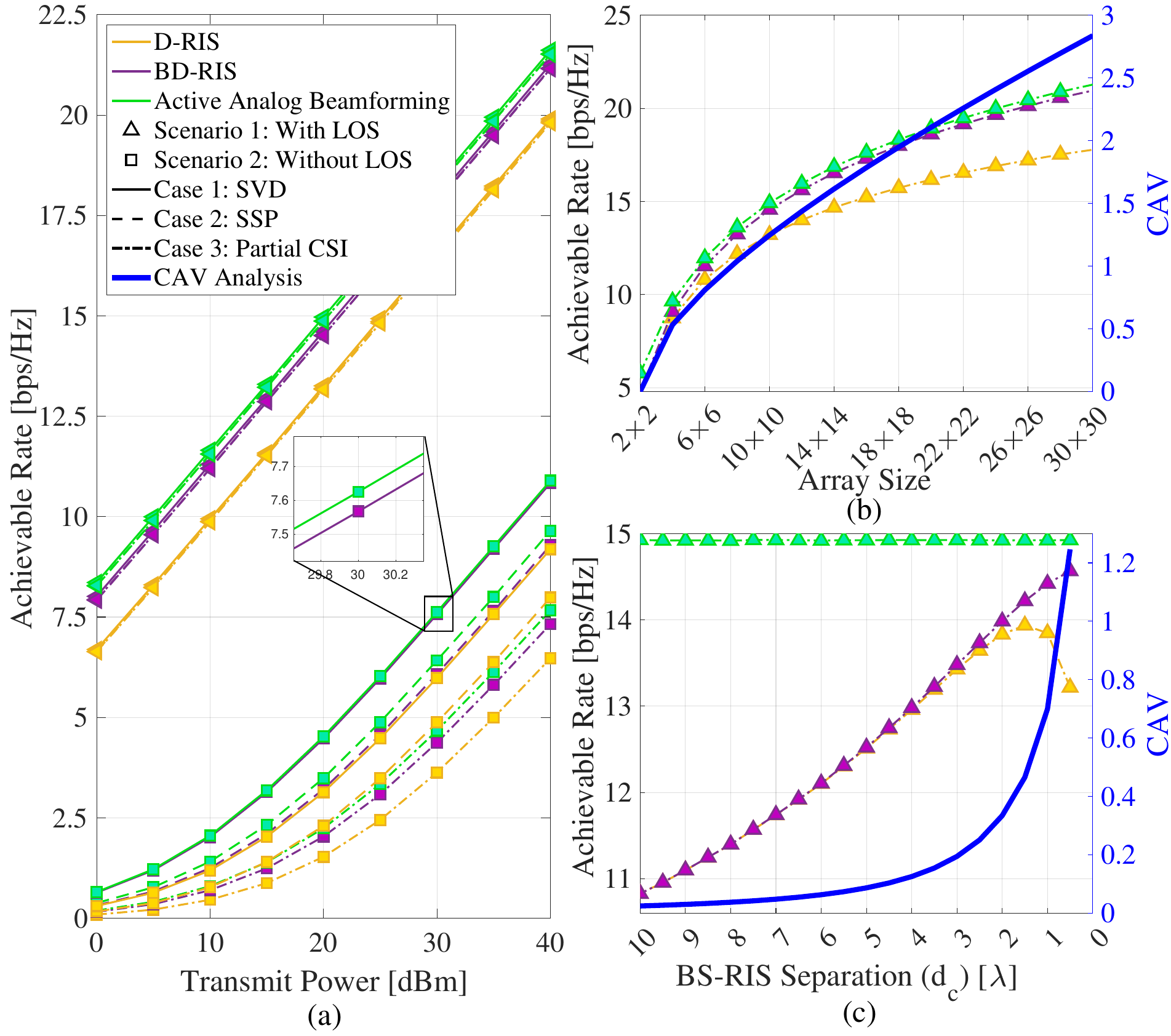}
    \caption{Achievable rate analysis \textcolor{black}{and its interaction with CAV} for the proposed BS-integrated BD-RIS and benchmarks:  
    (a) \textcolor{black}{achievable rate analysis for all considered scenarios and precoding strategies over varying transmit power levels},  
    (b) \textcolor{black}{achievable rate analysis and its interaction with CAV under Scenario 1 (LOS) and Case 3 (partial CSI) for different array sizes},  
    (c) \textcolor{black}{achievable rate analysis and its interaction with CAV under Scenario 1 (LOS) and Case 3 (partial CSI) for varying BS-RIS separation ($d_c$)}.}
    \label{fig:Geneal_Achievable_Rate_Analysis}
\end{figure}

Fig. \ref{fig:Geneal_Achievable_Rate_Analysis}(a) illustrates the achievable rate as a function of transmit power across various scenarios and precoding cases. In all simulations, the BD-RIS demonstrates performance close to that of the active analog beamforming antenna array. Notably, in Scenario 2, where LOS is unavailable, and under Case 1, the BD-RIS achieves an achievable rate comparable to the active analog beamforming antenna array. As discussed in the previous subsection, the BD-RIS’s ability to manipulate both the phase and amplitude enables it to leverage the optimal phase and amplitude provided by $\boldsymbol{v}_1$. 
In contrast, the active analog beamforming antenna array is limited to utilizing only the phase information from the dominant eigenmode $\boldsymbol{v}_1$. 
This additional capability of the BD-RIS enhances its beamforming performance, allowing it to compensate for the multiplicative path loss and achieve comparable performance to the active analog beamforming antenna array in Scenario 2. 
Similar to the ABER performance, the reduced beamforming gain of the D-RIS leads to degraded achievable rate performance across all simulations.

It is worth noting that in the presence of a dominant LOS, as in Scenario 1, the achievable rate is identical across all precoding cases, as illustrated in Fig. \ref{fig:Geneal_Achievable_Rate_Analysis}(a). For the remainder of this paper, we focus on simulations conducted under Scenario 1 using Case 3, the less complex beamforming method.
Fig. \ref{fig:Geneal_Achievable_Rate_Analysis}(b) presents the achievable rate (scaled on left $y$-axis) as a function of the array size, alongside the corresponding CAV (blue curve scaled on right $y$-axis), comparing the performance of the proposed BS-integrated BD-RIS against benchmark schemes. For a small RIS array of size $2 \times 2$, the BD-RIS and D-RIS exhibit identical performance. This is because, in this configuration, the distances between the active antenna and all RIS elements ($d_m$) are equal, resulting in uniform channel gain amplitudes across the elements; hence $\textrm{CAV} = 0$ and there is no need for amplitude compensation. However, as the array size increases, the channel gain amplitudes across the RIS elements start to vary, as explained in Section \ref{sec:CAV}.
As shown in Fig. \ref{fig:Geneal_Achievable_Rate_Analysis}(b), larger array sizes lead to increased CAV, introducing greater discrepancies in channel gain amplitudes across RIS elements and emphasizing the need to compensate for these variations. 
Since the D-RIS cannot address amplitude variations, the performance gap between the D-RIS and the active analog beamforming antenna array widens with increasing array size. 
In contrast, the BD-RIS, with its ability to compensate for both amplitude and phase variations, maintains a constant performance gap compared to the active analog beamforming antenna array as the array size increases. As previously discussed, this constant gap is attributed to the multiplicative path loss.


The BS-RIS separation ($d_c$) is another critical factor influencing the CAV, as outlined in Section \ref{sec:CAV}. To provide better insight, the CAV is plotted on the right  $y$-axis in Fig. \ref{fig:Geneal_Achievable_Rate_Analysis}(c), alongside the achievable rate (left  $y$-axis) for varying BS-RIS separation distances. As shown in Fig. \ref{fig:Geneal_Achievable_Rate_Analysis}(c), at sufficiently large separations, a reduction in $d_c$ generally leads to an increase in the achievable rate due to reduced multiplicative path loss. However, as $d_c$ decreases further, the D-RIS begins to exhibit noticeable performance degradation compared to the BD-RIS. 
This behavior arises because the increasing CAV at smaller $d_c$ values is more pronounced and demands compensation to maintain effective beamforming gain. While the BD-RIS, with its inter-element connections, can effectively mitigate amplitude variations, the D-RIS lacks this capability.
As shown in Fig. \ref{fig:Geneal_Achievable_Rate_Analysis}(c), when $d_c$ falls below a certain threshold ($d_c < 1.5 \lambda$ in the configuration considered here), the achievable rate performance of the D-RIS starts to decline with further decreasing $d_c$. 
According to Fig. \ref{fig:Beamforming_dc}, the beamforming performance of the D-RIS deteriorates with decreasing $d_c$, resulting in wider beams and reduced directivity. This suboptimal beamforming performance for $d_c < 1.5 \lambda$ leads to a decrease in the achievable rate for the D-RIS, as the negative impact of poor beamforming gain outweighs the positive effect of reduced multiplicative path loss. In contrast, the BD-RIS consistently demonstrates enhanced performance as $d_c$ decreases, attributed to its ability to effectively compensate for significant CAV and sustain robust beamforming. This enables the BD-RIS to fully capitalize on the reduced multiplicative path loss associated with smaller $d_c$.

\begin{figure}
    \centering
    \includegraphics[width=\columnwidth]{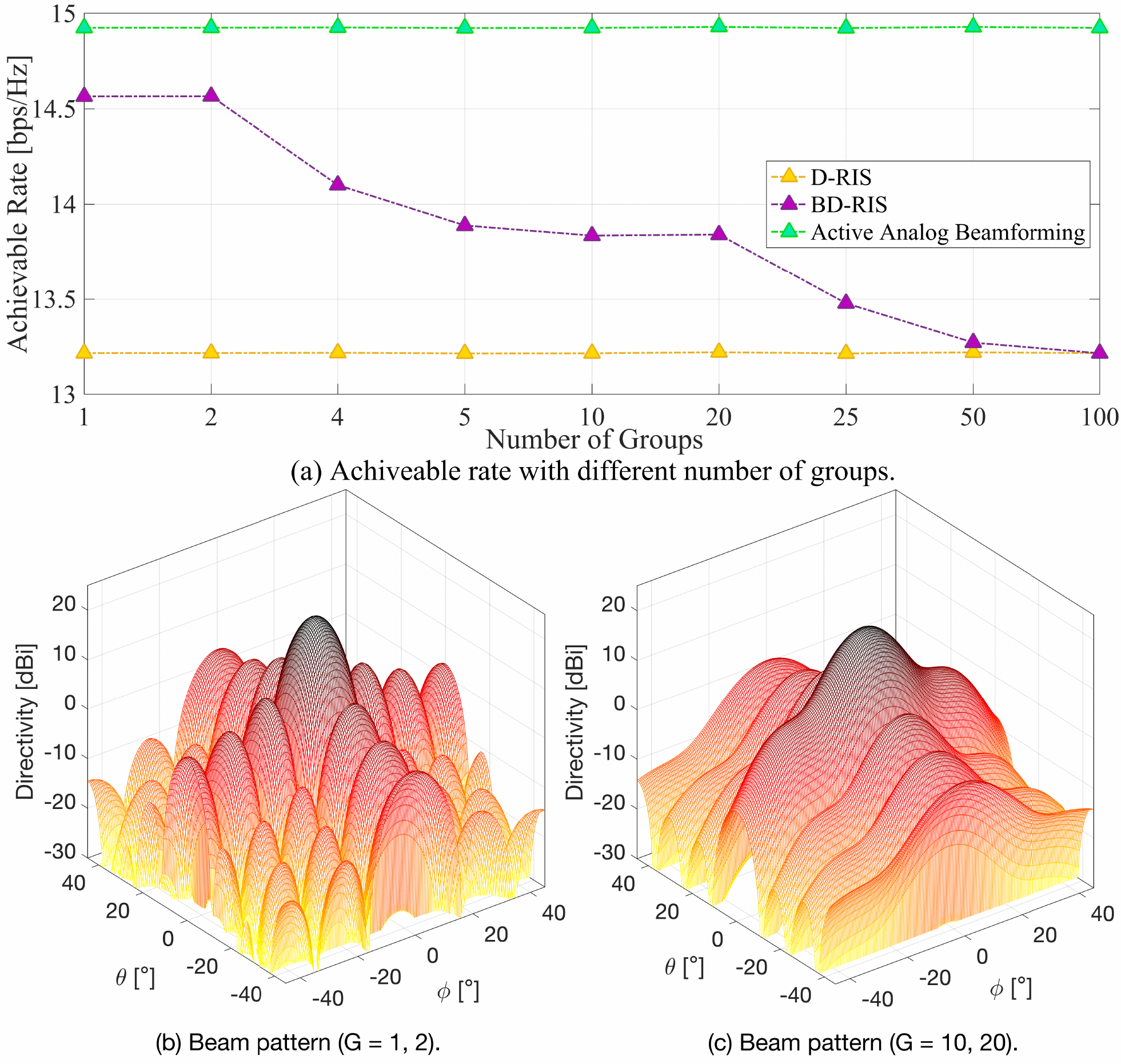}
    \caption{Achievable rate and beam pattern comparison of the BS-integrated BD-RIS under different group configurations: (a) Achievable rate performance for varying numbers of groups; (b) beam pattern comparison for fully-connected BD-RIS ($G = 1$) and group-connected BD-RIS ($G = 2$); (c) beam pattern comparison for group-connected BD-RIS configurations with $G = 10$ and $G = 20$.}
    \label{fig:BeamPattern_GroupConnection}
\end{figure}

\subsection{Group-Connected BD-RIS Structure}

In this subsection, we examine the impact of the group-connected BD-RIS, which features a less complex circuit architecture compared to the fully-connected structure, on the achievable rate performance and beamforming gain of the proposed system.
Fig. \ref{fig:BeamPattern_GroupConnection}(a) illustrates the achievable rate performance for different numbers of groups. As the number of groups increases, the inter-element connections within each group decrease, as each fully-connected group comprises fewer elements. 
\textcolor{black}{While reducing inter-element connections leads to lower circuit complexity, it also limits the available degrees of freedom for effectively compensating CAV, which in turn degrades the achievable rate. Nonetheless, specific cases illustrated in Fig. \ref{fig:BeamPattern_GroupConnection}(a) deviate from this general trend. In particular, the performance remains almost unchanged between $G = 1$ and $G = 2$, as well as between $G = 10$ and $G = 20$.}

\begin{figure}
    \centering
    \includegraphics[width=\columnwidth]{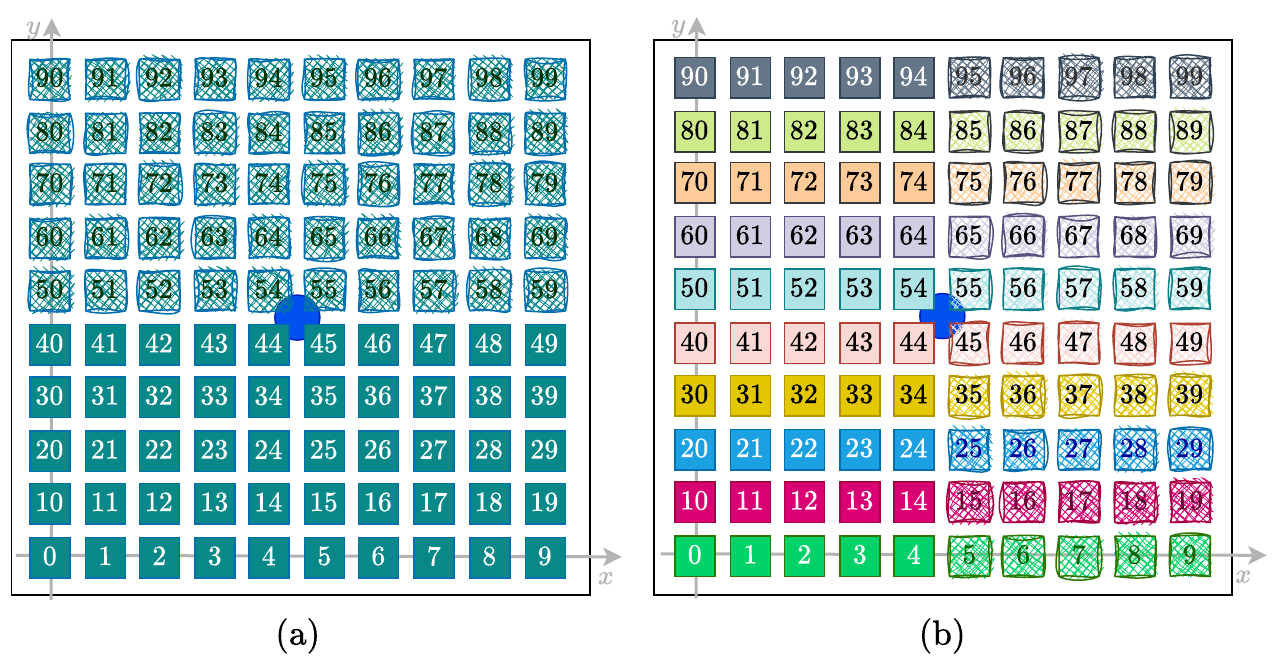}
    \caption{Grouping configurations of the BD-RIS: (a) a fully-connected structure transitioning to a $2$-group-connected structure ($G = 2$) without performance loss; (b) a $10$-group-connected structure transitioning to a $20$-group-connected structure ($G = 20$) while maintaining performance integrity.}
    \label{fig:GroupConnectedSymmetry}
\end{figure}

The reason for this behavior is illustrated in the beam patterns in Figs. \ref{fig:BeamPattern_GroupConnection}(b) and (c). For $G = 1$ and $G = 2$ (Fig. \ref{fig:BeamPattern_GroupConnection}(b)), the beam patterns are identical, and a similar observation holds for $G = 10$ and $G = 20$ (Fig. \ref{fig:BeamPattern_GroupConnection}(c)). This phenomenon can be attributed to the geometric symmetry in the group configurations, as shown in Fig. \ref{fig:GroupConnectedSymmetry}. The blue circle in Fig. \ref{fig:GroupConnectedSymmetry} denotes the location of the active antenna array, which is positioned at the back of the RIS at a distance $d_c$ from its center.
In Fig. \ref{fig:GroupConnectedSymmetry}(a), a fully-connected structure is depicted, capable of compensating for CAV across the entire BD-RIS. However, this fully-connected structure can be divided into two groups for a $2$-group-connected configuration, where one group is highlighted with solid colors, and the other is represented with cross-hatches. 
These groups exhibit geometric symmetry around the active antenna. Hence, it is evident that the set of $[\boldsymbol{g}]_m$ values for each group is identical, leading to an equivalent CAV value for both groups, \textcolor{black}{as proven in Proposition~\ref{prop:SymmetricGeometryCAV}}. Furthermore, \textcolor{black}{based on this proposition}, when these two sets are combined, the CAV of the entire BD-RIS remains equal to the CAV of each individual group. Consequently, a 2-group-connected BD-RIS structure with linear permutation suffices to effectively compensate for CAV across the entire BD-RIS.\footnote{Clearly, the 2-group-connected structure can be further divided into a 4-group-connected structure while maintaining identical CAV values due to the symmetry around the active antenna. However, achieving this requires more advanced permutations beyond the linear approach, which will be explored in our future work.} 
Similarly, Fig. \ref{fig:GroupConnectedSymmetry}(b) illustrates a $10$-group-connected structure, where the elements in each row are assigned to the same group. This structure can be further divided into two symmetric sub-groups around the active antenna, ensuring that the CAV values for each sub-group remain identical and equivalent to the original group. Consequently, inter-element connections can be minimized without compromising performance in such symmetric configurations.

It is worth noting that, as depicted in Fig. \ref{fig:GroupConnectedSymmetry}(b), the $10$-group-connected structure incorporates inter-element connections exclusively along one dimension (the $x$-axis), with no connections along the $y$-axis. This configuration leads to a narrow HPBW in the elevation dimension while maintaining a broader beam in the azimuth dimension, as demonstrated in Fig. \ref{fig:BeamPattern_GroupConnection}(b).


\section{Conclusion}\label{Sec:Conclusion}
This paper studied a novel array architecture that enables passive beamforming at the BS, achieving beamforming gains comparable to those of active analog beamforming antenna arrays. By integrating the BD-RIS into the BS and leveraging its superior passive beamforming capability, a SISO system can emulate MISO performance.
Our analysis demonstrated that the beamforming gain of the BS-integrated BD-RIS is robust to variations in RIS dimensions and BS-RIS separation, owing to its inter-element connection structure. In contrast, the performance of traditional D-RIS is highly sensitive to these critical design parameters. Consequently, the only notable performance degradation in the BS-integrated BD-RIS arises from the multiplicative path loss, which can be minimized by reducing the BS-RIS separation without concerns about beamforming degradation at short separations.
To address circuit complexity, we proposed a simple grouping strategy that reduces the number of inter-element connections while maintaining equivalent beamforming performance by grouping the elements symmetrically around the active antenna.
Comprehensive simulation results validated the efficiency of the proposed BS-integrated BD-RIS architecture and demonstrated its superiority over traditional D-RIS systems. Furthermore, it was shown that the proposed architecture achieves performance levels close to those of active analog beamforming antenna arrays, offering an affordable alternative in terms of cost and power consumption. \textcolor{black}{Future work will focus on extending the proposed system model beyond the unilateral approximation by incorporating EM interactions between the active antenna and RIS elements.}

\appendices

{\color{black}
\section{Proof of Proposition~\ref{prop:gainfloor}}\label{app:gainfloor}
Consider the limiting case where the RIS-UE channel exhibits identical amplitudes across all elements. This situation naturally arises in a single-path channel, where the impinging wavefront preserves constant magnitude across the RIS elements (see \cite{9514409}).
Hence, $|[\boldsymbol{h}]_m|=\beta$, where $\beta$ is a constant equal to the path-gain amplitude. Substituting this into (\ref{eq:SNR D-RIS}) and (\ref{eq: SNR BD-RIS}), the SNR expressions for D-RIS and BD-RIS become proportional to $\beta^2 \left(\sum_{m=1}^M |[\boldsymbol{g}]_m|\right)^2$ and $M \beta^2 \sum_{m=1}^{M}|[\boldsymbol{g}]_m|^2$, respectively;
thus, exploiting (\ref{eq:mean (CAV)}), the SNR ratio is given by
\begin{equation}\label{eq:SNR ratio}
    \frac{\mathrm{SNR}_{\mathrm{BD-RIS}}}{\mathrm{SNR}_{\mathrm{D-RIS}}} = \frac{\frac{1}{M} \sum_{m=1}^{M} |[\boldsymbol{g}]_m|^2}{\left(\frac{1}{M}\sum_{m=1}^M |[\boldsymbol{g}]_m|\right)^2} = \frac{\frac{1}{M} \sum_{m=1}^{M} |[\boldsymbol{g}]_m|^2}{\mu_{|\boldsymbol{g}|}^2}.
\end{equation}
From (\ref{eq:std (CAV)}) and (\ref{eq:mean (CAV)}), we have
\begin{equation}\label{eq:BS-RIS channel norm}
    \frac{1}{M} \sum_{m=1}^{M} |[\boldsymbol{g}]_m|^2 = \sigma_{|\boldsymbol{g}|}^2 + \mu_{|\boldsymbol{g}|}^2.
\end{equation}
Substituting (\ref{eq:BS-RIS channel norm}) into (\ref{eq:SNR ratio}) yields
\begin{equation}
    \frac{\mathrm{SNR}_{\mathrm{BD-RIS}}}{\mathrm{SNR}_{\mathrm{D-RIS}}} = 1 + \frac{\sigma_{|\boldsymbol{g}|}^2}{\mu_{|\boldsymbol{g}|}^2} = 1 + \mathrm{CAV}^2.
\end{equation}
Converting to dB, the minimum guaranteed SNR gain is $G_{\min} \;=\; 10\log_{10}\!\big(1+\mathrm{CAV}^2\big)$, 
which completes the proof of Proposition~\ref{prop:gainfloor}.

\section{Proof of Proposition~\ref{prop:asymptotic}}\label{app:asymptotic}

For the Rayleigh fading case, each RIS-UE channel entry follows a complex Gaussian distribution, i.e., $[\boldsymbol{h}]_m \sim \mathcal{CN} (0, \sigma_h^2)$.
Consequently, $|[\boldsymbol{h}]_m|$ is Rayleigh distributed with mean, second moment, and variance given by
\begin{equation}\label{eq:mean Rayleigh}
    \mathbb{E}[|[\boldsymbol{h}]_m|] = \frac{\sqrt{\pi}}{2} \sigma_h,
\end{equation}
\begin{equation}\label{eq:second momentum Rayleigh}
    \mathbb{E}[|[\boldsymbol{h}]_m|^2] = \sigma_h^2,
\end{equation}
\begin{equation}\label{eq:variance Rayleigh}
    \mathbb{V}[|[\boldsymbol{h}]_m|] = (1 - \frac{\pi}{4}) \sigma_h^2.
\end{equation}
By taking the expectation of both sides of (\ref{eq:SNR D-RIS}), we obtain $\mathbb{E}[\mathrm{SNR}_{\mathrm{D-RIS}}] \propto \mathbb{E} \left[ \left( \sum_{m=1}^M |[\boldsymbol{h}]_m | |[\boldsymbol{g}]_m| \right)^2 \right]$.
To evaluate this, we use the identity 
\begin{equation}\label{eq:variance-mean-momentum}
    \mathbb{E}[X^2] = (\mathbb{E}[X])^2 + \mathbb{V}[X],
\end{equation} 
where $X = \sum_{m=1}^M |[\boldsymbol{h}]_m | |[\boldsymbol{g}]_m|$. The mean term simplifies as
\begin{equation}
    \mathbb{E}[X] = \mathbb{E}\left[ \sum_{m=1}^M |[\boldsymbol{h}]_m | |[\boldsymbol{g}]_m| \right] = \sum_{m=1}^M \mathbb{E} \left[ |[\boldsymbol{h}]_m | |[\boldsymbol{g}]_m| \right].
\end{equation}
Because the BS-RIS channel $\boldsymbol{g}$ is deterministic, it follows that
\begin{equation}\label{eq:mean X}
    \mathbb{E}[X] = \sum_{m=1}^M |[\boldsymbol{g}]_m| \mathbb{E} \left[ |[\boldsymbol{h}]_m | \right].
\end{equation}
Substituting (\ref{eq:mean (CAV)}) together with (\ref{eq:mean Rayleigh}) into (\ref{eq:mean X}), we get
\begin{equation}\label{eq:Final mean X}
    \mathbb{E}[X] = M \frac{\sqrt{\pi}}{2} \sigma_h \mu_{|\boldsymbol{g}|}.
\end{equation}
Similarly, with independent channel coefficients $[\boldsymbol{h}]_m$ and using (\ref{eq:BS-RIS channel norm}), the variance term can be expressed as
\begin{equation}\label{eq:Final variance X}
    \mathbb{V}[X] = \sum_{m=1}^{M} |[\boldsymbol{g}]_m|^2 \mathbb{V}\left[ |[\boldsymbol{h}]_m | \right] = M \sigma_h^2 (\sigma_{|\boldsymbol{g}|}^2 + \mu_{|\boldsymbol{g}|}^2) (1 - \frac{\pi}{4}).
\end{equation}
By substituting (\ref{eq:Final mean X}) and (\ref{eq:Final variance X}) into (\ref{eq:variance-mean-momentum}), we obtain
\begin{equation}
    \begin{aligned}
        \mathbb{E} & [\mathrm{SNR}_{\mathrm{D-RIS}}] \propto \mathbb{E}\left[X^2\right] \\
        & = \frac{\pi}{4}M^2 \sigma_h^2 \mu_{|\boldsymbol{g}|}^2 + M \sigma_h^2 \left( \sigma_{|\boldsymbol{g}|}^2 + \mu_{|\boldsymbol{g}|}^2 \right) \left( 1 - \frac{\pi}{4} \right).
    \end{aligned}
\end{equation}
Similarly, taking expectation of (\ref{eq: SNR BD-RIS}) yields
\begin{equation}\label{eq:mean BD-RIS}
    \mathbb{E} \left[ \mathrm{SNR}_{\mathrm{BD-RIS}} \right] \propto \mathbb{E} \left[ \| \boldsymbol{h} \|^2 \| \boldsymbol{g} \|^2 \right] = \mathbb{E}[\| \boldsymbol{h}\|^2] \sum_{m=1}^M |[\boldsymbol{g}]_m|^2.
\end{equation}
From (\ref{eq:second momentum Rayleigh}), it follows that
\begin{equation}\label{eq:mean h}
    \mathbb{E}[\| \boldsymbol{h} \|^2] = \sum_{m=1}^M \mathbb{E}[|[\boldsymbol{h}]_m|^2] = M \sigma_h^2.
\end{equation}
Substituting (\ref{eq:BS-RIS channel norm}) and (\ref{eq:mean h}) into (\ref{eq:mean BD-RIS}) gives
\begin{equation}
    \mathbb{E} \left[ \mathrm{SNR}_{\mathrm{BD-RIS}} \right] \propto M^2 \sigma_h^2 (\sigma_{|\boldsymbol{g}|}^2 + \mu_{|\boldsymbol{g}|}^2).
\end{equation}
The SNR ratio between BD-RIS and D-RIS then simplifies to
\begin{equation}
    \frac{\mathbb{E} \left[ \mathrm{SNR}_{\mathrm{BD-RIS}} \right]}{\mathbb{E} \left[ \mathrm{SNR}_{\mathrm{D-RIS}} \right]} = \frac{M^2 (\sigma_{|\boldsymbol{g}|}^2 + \mu_{|\boldsymbol{g}|}^2)}{\frac{\pi}{4}M^2  \mu_{|\boldsymbol{g}|}^2 + M \left( \sigma_{|\boldsymbol{g}|}^2 + \mu_{|\boldsymbol{g}|}^2 \right) \left( 1 - \frac{\pi}{4} \right)}.
\end{equation}
For sufficiently large RIS sizes ($M\gg 1$), retaining only the leading-order contribution in $M$ yields:
\begin{equation}
    \frac{\mathbb{E} \left[ \mathrm{SNR}_{\mathrm{BD-RIS}} \right]}{\mathbb{E} \left[ \mathrm{SNR}_{\mathrm{D-RIS}} \right]} \approx \frac{\sigma_{|\boldsymbol{g}|}^2 + \mu_{|\boldsymbol{g}|}^2}{\frac{\pi}{4} \mu_{|\boldsymbol{g}|}^2} = \frac{4}{\pi} \left( 1 + \mathrm{CAV}^2 \right).
\end{equation}
Thus, the asymptotic gain in dB is $G_{\max} = G_{\min} + 10\log_{10}\left(\frac{4}{\pi}\right)$.
}

{\color{black}
\section{Proof of Proposition \ref{prop:SymmetricGeometryCAV}} \label{app:SymmetricGeometryCAV}

   We first consider the case of two subsets and later generalize the result to the $K$-subset case.  Let $A = A_1 \cup A_2$ with $A_1 \cap A_2 = \emptyset$, where each subset contains $N$ elements. From (\ref{eq:std (CAV)}) and (\ref{eq:mean (CAV)}), it is straightforward to show that the sample mean and variance of each subset are identical, i.e., $\mu_{A_1} = \mu_{A_2}$ and 
    $\sigma_{A_1} = \sigma_{A_2}$. Hence, $\mathrm{CAV}(A_1) = \mathrm{CAV}(A_2)$.

   For simplicity of notation, we define $A_j = \{ x_1, \dots, x_N \}$, where $x_i = |[\boldsymbol{g}]_i|$. For the union set $A$, which contains $2N$ elements, the sample mean is
    \begin{equation}
        \begin{aligned}
            \mu_A &= \frac{1}{2N} \sum_{x \in A} x
            = \frac{1}{2N} \left( \sum_{x \in A_1} x + \sum_{x \in A_2} x \right) \\
            &= \frac{1}{2N} \left( N \mu_{A_1} + N \mu_{A_2} \right) = \mu_{A_1}.
        \end{aligned}
    \end{equation}    
    Similarly, the sample variance is 
    \begin{equation}
    \begin{aligned}
    \sigma_A^2 &= \frac{1}{2N} \sum_{x \in A} (x - \mu_A)^2 \\
            &= \frac{1}{2N} \left( \sum_{x \in A_1} (x - \mu_{A_1})^2 
            + \sum_{x \in A_2} (x - \mu_{A_2})^2 \right)\\
            &= \frac{1}{2N} \left( N \sigma_{A_1}^2 + N \sigma_{A_2}^2 \right) 
            = \sigma_{A_1}^2.
    \end{aligned}
    \end{equation}
    Hence, $\mu_A = \mu_{A_1}$ and $\sigma_A = \sigma_{A_1}$, which implies $\mathrm{CAV}(A) = \mathrm{CAV}(A_1) = \mathrm{CAV}(A_2)$.   
    
    \underline{\textit{Generalization:}}  
    Now suppose that $A$ is partitioned into $K$ disjoint symmetric subsets $A_1, A_2, \dots, A_K$, each of size $N$, such that all subsets have identical statistics ($\mu_{A_k} = \mu_{A_1}$, $\sigma_{A_k} = \sigma_{A_1}$ for all $k$). Then, the sample mean of $A$ is  
    \begin{equation}
    \mu_A = \frac{1}{KN} \sum_{k=1}^K \sum_{x \in A_k} x 
        = \frac{1}{KN} \sum_{k = 1}^K N\mu_{A_1} 
        = \mu_{A_1}.
    \end{equation}  
    Similarly, the sample variance is  
    \begin{equation}
        \sigma_A^2 = \frac{1}{KN} \sum_{k=1}^K \sum_{x \in A_k} (x - \mu_{A_1})^2
        = \frac{1}{KN} \sum_{k = 1}^K N\sigma_1^2
        = \sigma_{A_1}^2.
    \end{equation} 
    Hence, $\mathrm{CAV}(A) = \mathrm{CAV}(A_1) = \cdots = \mathrm{CAV}(A_K)$.
}



\ifCLASSOPTIONcaptionsoff
  \newpage
\fi




\bibliographystyle{IEEEtran}
\bibliography{references}

\end{document}